\documentclass{ectj}
\usepackage{amsfonts,amssymb,graphics,epsfig,verbatim,bm,latexsym,amsmath,url,amsbsy,algorithm,algpseudocode,multirow,threeparttable, booktabs, ragged2e, xcolor, adjustbox, soul}
\newtheorem{theorem}{Theorem}
\newtheorem{assumption}{Assumption}

\newtheorem{lemma}{Lemma}

\newtheorem{remark}{Remark}
\renewcommand{\thesection}{\arabic{section}}
\renewcommand{\theequation}{\arabic{section}.\arabic{equation}}

\DeclareMathOperator{\Var}{Var}
\DeclareMathOperator{\Cov}{Cov}

\newcommand{\E}{\mathbb{E}}
\newcommand{\R}{\mathbb{R}}

\definecolor{inred}{HTML}{CD5C5C}
\definecolor{lgreen}{HTML}{8FBC8F}
\definecolor{bblue}{HTML}{708090}

\newcommand{\txtred}[1]{\textcolor{inred}{#1}}
\newcommand{\txtgreen}[1]{\textcolor{lgreen}{#1}}
\newcommand{\txtblue}[1]{\textcolor{bblue}{#1}}

\received{...}
\accepted{...}
\volume{21}
\title[Double Machine Learning for Time Series]{Double Machine Learning for Time Series}
\author[Ciganovic, D'Amario and Tancioni]{Milos~Ciganovic$^{\dagger}$,
                Federico~D'Amario$^{\ddagger}$ and
                Massimiliano~Tancioni$^{\dagger}$}

\address{$^{\dagger}$Sapienza University of Rome}
\email{milos.ciganovic@uniroma1.it; massimiliano.tancioni@uniroma1.it}

\address{$^{\ddagger}$Bank of England}
\email{Federico.D'Amario@bankofengland.co.uk}
\def\AmSTeX{$\cal A$\kern-.1667em\lower.5ex\hbox{$\cal M$}\kern-.125em
    $\cal S$-\TeX}
\def\BibTeX{{\rm B\kern-.05em{\sc i\kern-.025em b}\kern-.08em
    T\kern-.1667em\lower.7ex\hbox{E}\kern-.125emX}}

\begin{document}
    \begin{abstract}
\sloppy 
We adapt Double Machine Learning to macroeconomic time series by combining regularized nuisance estimation with Reverse Cross-Fitting. This deterministic scheme exploits time reversibility to use time-reversed auxiliary blocks and, unlike neighbor-deletion designs, avoids buffer blocks, thereby improving sample usage. We derive conditions for asymptotic validity and show in simulations that the estimator performs well in realistic finite samples across the designs considered, including cases with misspecification, heteroskedasticity, and state dependence. We also show that, in high dimensions, predictive tuning metrics do not minimize bias in the causal score. We therefore propose a calibration rule targeting a Goldilocks zone of tuning parameters delivering stable partialled-out signals and reduced small-sample bias. We extend the method to residualized Local Projections and apply it to estimate the dynamic effects of a rise in Tier 1 regulatory capital. The results illustrate the usefulness of the approach for macroeconomic time-series inference.
        \keywords{Causal Inference, Double Machine Learning, Time Series, Cross-Fitting, Hyperparameters tuning, Local Projections.}
    \end{abstract}
\section{Introduction}
\label{sec:intro}
The recently developed Double/Debiased Machine Learning (DML) estimator (\citealp{Chernozhukov2018}) provides a valid alternative to standard causal inference methods for microeconomic data (\citealp{ahrens2025introduction}). Building on \citet{robinson1988root}'s partially linear regression (PLR) model, it identifies a low-dimensional causal parameter while controlling for potentially high-dimensional confounders. These nuisance functions, though not of direct interest, must be accurately approximated to ensure valid inference, typically through regularization-based estimation in high-dimensional settings. DML reduces the sensitivity of the target parameter estimate to errors in this nuisance approximation. Robustness is achieved through Neyman orthogonalization, which removes regularization bias, and randomized cross-fitting (CF), which alternately assigns folds as training and test samples, mitigating overfitting and improving efficiency in i.i.d. settings.

This paper examines the applicability of DML to macroeconomic time series, which are typically short, strongly dependent, and highly endogenous. Such features can create high-dimensional settings where the parameter space may exceed the sample size. Standard DML, designed for independent observations, cannot yield valid inference in this context, as randomized CF would violate the series's sequential structure.

By modifying specific features of DML, we establish its validity for time-dependent data. The adapted estimator has two components. First, a novel Reverse Cross-Fitting (RCF) procedure uses time
reversibility of stationary Gaussian processes to license the use of
time-reversed auxiliary blocks, and avoids the buffer regions of
neighbor-deletion designs to improve sample usage in realistic time series settings. Second, we introduce a fold-specific, stability-based calibration for nuisance estimators that replaces standard predictive tuning. The criterion targets a “Goldilocks zone” of tuning parameters with locally stable fold-specific RMSE and, consistent with Neyman orthogonality, lower small-sample bias in high-dimensional settings (\citealp{mcgrath2025optimal}).

To parallel the plain DML framework, we formulate the time-series inference problem using the PLR model of \citet{robinson1988root}, as in \citet{Chernozhukov2018}. The goal is root-$T$-consistent estimation and valid inference for a low-dimensional causal parameter $\theta_0$ in the presence of high-dimensional controls and nuisance functions $\eta_0=(m_0,g_0)$:
\begin{align}
    y_t &= \theta_0d_t + g_0(X_t) + \epsilon_t, \quad E[\epsilon_t|X_t,d_t] = 0, \label{y_t_plr}\\
    d_t &= m_0(X_t) + \xi_t, \quad E[\xi_t|X_t] = 0.\label{d_t_plr}
\end{align}
where $y_t$ is the outcome, $d_t$ a scalar policy variable, $X_t = (X_1,\ldots, X_p)$ the vector of controls, and $\epsilon_t$ and $\xi_t$ disturbances, for $t = 1,\ldots,T$.

\citet{belloni2014high} propose post-double-selection, which reduces regularization bias by applying Lasso to the reduced-form of \eqref{y_t_plr} and of \eqref{d_t_plr}, and regressing $y_t$ on $d_t$ and the union of selected controls.\footnote{Naive selection in \eqref{y_t_plr} may omit controls with coefficients moderately close to zero, thereby inducing omitted-variable (regularization) bias (\citealp{belloni2014high}).}

\citet{Chernozhukov2018} generalize this approach to flexible nuisance estimation through DML. Relative to post-double-selection, DML adds three ingredients: (i) Neyman orthogonalization, which removes regularization bias by residualizing the reduced-form outcome equation $y_t=g_0^r(X_t)+\chi_t$, with $g_0^r\neq g_0$, together with \eqref{d_t_plr}, yielding
\begin{align}
\chi_t &= \theta_0 \xi_t + \epsilon_t, \label{resid_eq}
\end{align}
so that applying OLS on residuals accommodates flexible estimation of $g_0^r$ and $m_0$; (ii) sample splitting, which separates nuisance estimation from estimation of $\theta_0$, mitigating overfitting; and (iii) cross-fitting, which rotates the roles of training and test sets to recover sample-use efficiency.

To handle temporal dependence, \citet{semenova2023inference} propose Neighbors-Left-Out (NLO) CF, which partitions the series into blocks and omits neighboring observations to obtain quasi-independent folds. In macroeconomic time series, this can require substantial truncation and reduce sample-use efficiency. We introduce RCF to address this limitation.

We proceed in three steps. First, we develop the time-series RCF-DML estimator and derive conditions for asymptotically valid inference with regularized nuisance estimators in high-dimensional, approximately linear settings. A key requirement is conditional stability, a time-series analog of a no-data-leakage condition that prevents dependence between training and main blocks from reintroducing first-order bias. Under this and standard DML conditions, the estimator is asymptotically consistent, with long-run variance consistently estimable by kernel HAC methods.\footnote{The supplementary appendix reports simulation-based sensitivity checks for alternative HAC kernels and bandwidths.} 

Second, we study the estimator in simulations based on high-dimensional adaptations of standard time-series DGPs. The benchmark is an approximately sparse recursive SVAR (\citealp{giannone2021economic}), complemented by a time-series adaptation of the PLR model (\citealp{belloni2014restud, Chernozhukov2018}) and a structural dynamic factor model (SDFM) following \citet{stockwatson2016DFM}. Across designs, including under model misspecification, RCF-DML delivers root-$T$-consistent estimates with coverage approaching nominal levels and, in realistic finite samples, outperforms NLO. The supplementary simulations further show that these conclusions are broadly robust to severe stress tests of the maintained assumptions, including GARCH-type heteroskedasticity, state dependence, and more demanding factor dynamics. Simulations also show that, in high dimensions, standard predictive tuning does not minimize bias in the causal score, and that the gap between prediction-optimal and minimum-bias tuning widens as dimensionality rises relative to sample size.\footnote{This divergence echoes \citet{robinson1988root}'s findings on allowable kernel complexity in the prototypical PLR model. The issue is independent of the CF design and not driven by failures in relaxing Donsker conditions.} Motivated by this result and related findings in the recent DML literature (\citealp{celentano2023challenges, mcclean2024double, mcgrath2025optimal}), we propose a stability-based tuning rule that targets a Goldilocks zone of hyperparameters with low local variation in fold-specific RMSE. Importantly, RMSE itself need not be minimized within the selected region.\footnote{The benchmark implementation evaluates local RMSE stability over the smallest admissible window. The supplementary appendix also reports sensitivity checks for different stability windows.} Full simulation designs and results are summarized in Section \ref{Simulations} and detailed in the supplementary appendix.

Third, we extend the method to residualized Local Projections (LP) and apply it to estimate the dynamic effects of prudential capital shocks, a setting in which the short span of regulatory data makes RCF-DML especially suitable. Under maintained horizon-specific exogeneity and sufficiently rich high-dimensional conditioning, the time-series-adapted DML estimator provides a feasible procedure for estimating scalar dynamic responses and tracing macroeconomic shock transmission. In this setting, DML implements the required regularized conditioning strategy while preserving valid inference, a distinction that becomes especially important in nonlinear or state-dependent environments. In the application, our focus is on scalar dynamic effects after high-dimensional residualization, rather than on estimating a state-dependent impulse response function.

Our analysis contributes to the growing literature on ML-assisted causal inference (\citealp{athey2016recursive, athey2019generalized, Chernozhukov2018}). Within DML, it relates to recent work on cross-sectional and panel settings (\citealp{ahrens2025model, chang2020double, chernozhukov2022automatic, chernozhukov2022locally, dube2020monopsony, semenova2021debiased, semenova2023inference}). For macroeconomic applications, it complements \citet{adamek2024local}, who study high-dimensional time-series LPs using desparsified Lasso, and \citet{agostini2024vaccination}, who apply dynamic DML-type residualization to panel LPs. On nuisance calibration, our results reinforce the view that tuning for DML inference differs from predictive optimality. \citet{mcclean2024double} advocate undersmoothing to improve convergence, while \citet{mcgrath2025optimal} show that, in high-dimensional (inconsistency) regimes, minimum bias is attained by minimizing the asymptotic variance of the target functional.

The remainder of the paper is organized as follows. Section \ref{RCF} introduces RCF and time reversibility, and establishes asymptotic consistency and efficiency of RCF-DML under suitable conditions. Section \ref{sec:goldilocks} presents the Goldilocks zone tuning rule and its role in bias control for high-dimensional time series. Section \ref{Simulations} reports the simulation evidence. Section \ref{Application} presents the empirical application. Section \ref{conclusions} concludes.

\section{Inference with Reverse Cross-fitting}\label{RCF}
Cross-fitting restores efficiency after sample splitting by rotating training and test roles across folds. In i.i.d. settings this is achieved by random partitioning. However, random splitting is infeasible for time series because it breaks the temporal dependence structure.

We propose RCF as a time-series alternative tailored to macroeconomic data. 
RCF combines two features. The construction of auxiliary samples relies on time reversibility of stationary Gaussian processes, which exploits the use of time-reversed blocks for nuisance estimation; the absence of left-out blocks between auxiliary and main samples maximizes sample usage at moderate number of folds. For realistic macroeconomic samples, this design avoids the truncation costs of NLO.\footnote{A stochastic process $X_t$ is time-reversible if, for any window length $D$, the joint distributions of $X_t = \{X_t,X_{t+1},\ldots, X_{t+D}\}$ and of its time-reversed version, $X'_t = \{X_{t+D},X_{t+D-1},\ldots, X_t\}$ are identical for all $t$. All stationary Gaussian processes are time-reversible. Measurable transformations preserve reversibility because finite-dimensional distributions are determined by the symmetric covariance function (\citealp{weiss1975time,hallin1988time}). More generally, \citet{cheng1999miscellanea} gives necessary and sufficient conditions for reversibility of stationary linear processes without requiring higher-order moments, and \citet{tong2005time} extends the analysis to the multivariate case, covering many non-Gaussian stationary linear processes. Time reversibility is testable; see, for example, \citet{rothman1990characterization} and \citet{martinez2018detection}}

\subsection{The mechanics of time-reversed Cross-Fitting}\label{sec:mechanics}
Let $(W_t)_{t=1}^T$ be a covariance-stationary, ergodic time series on $(\Omega,\mathcal F,\mathbb P)$. Partition the set $\{1, \ldots, T\}$ into adjacent blocks $\{B_k\}_{k=1}^{K}$, each of length $T_{\text{block}}=\lfloor T/K \rfloor$, up to at most $O(1)$ boundary adjustments. For any $k$, define
\[
L(k)=\{1,\ldots,k\}, \qquad R(k)=\{k,\ldots,K\}.
\]
The corresponding left and right auxiliary time-index sets are
\[
B_k^L := \bigcup_{i\in\{1,\ldots,K\}\setminus R(k)} B_i
       = \bigcup_{i<k} B_i,
\qquad
B_k^R := \bigcup_{i\in\{1,\ldots,K\}\setminus L(k)} B_i
       = \bigcup_{i>k} B_i.
\]
Let the associated data sets be
\[
D_k=\{W_{\cdot,t}:t\in B_k\}, \qquad
D_{L,k}=\{W_{\cdot,t}:t\in B_k^L\}, \qquad
D_{R,k}=\{W_{\cdot,t}:t\in B_k^R\}.
\]

To illustrate, consider $K=5$ and, for simplicity, suppose
$T=KT_{\text{block}}$. The partition defines five non-overlapping main
blocks $B_1,\ldots,B_5$ and the corresponding auxiliary index sets:
$B_1^L=\varnothing$, $B_1^R=\bigcup_{i=2}^{5} B_i$;
$B_2^L=B_1$, $B_2^R=\bigcup_{i=3}^{5} B_i$;
$B_3^L=\bigcup_{i=1}^{2} B_i$, $B_3^R=\bigcup_{i=4}^{5} B_i$;
$B_4^L=\bigcup_{i=1}^{3} B_i$, $B_4^R=B_5$;
$B_5^L=\bigcup_{i=1}^{4} B_i$, $B_5^R=\varnothing$.
Nuisance functions are therefore estimated in time-reversed mode in the first two folds, in forward mode in the last two, and in both directions in the central fold.

In typical macroeconomic samples, one side of the auxiliary sample may be too short to support stable nuisance estimation. To avoid nuisance estimates dominated by very short segments, RCF drops the smaller side of the auxiliary sample: $D_{\text{aux},k} = D_{L,k}$ if $|D_{L,k}| > |D_{R,k}|$, $D_{\text{aux},k} = D_{R,k}$ if $|D_{R,k}| > |D_{L,k}|$, and $D_{\text{aux},k} = D_{L,k} \cup D_{R,k}$ if $|D_{L,k}| = |D_{R,k}|$. Figure \ref{fig:RCF} illustrates this five-fold Reverse Cross-Fitting scheme.


\begin{figure}[H]
     \centering
     \includegraphics[width = \textwidth]{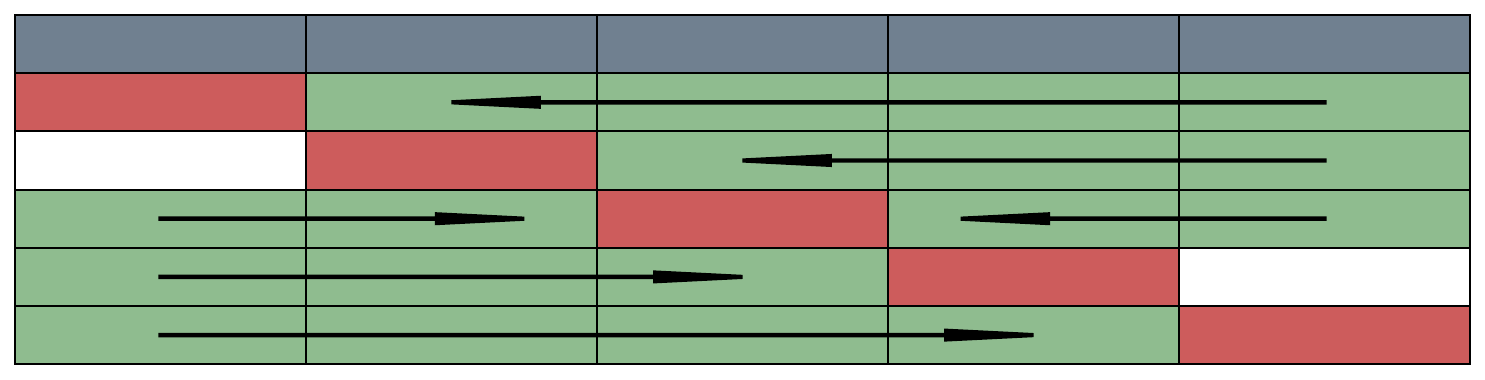}
     \caption{Example of Reverse Cross-Fitting using five folds. The \txtblue{Blue} area represents the whole sample. \txtred{Red} blocks are \txtred{``Main"} observations. \txtgreen{Green} blocks are \txtgreen{``Auxiliary"} observations. White blocks represent left-out observations, and the direction of the arrows indicates the direction of the estimate.}
     \label{fig:RCF}
\end{figure}

To clarify the sample-use advantage of RCF and its efficiency (intended as more effective use of the available sample in the spirit of
\citet{Chernozhukov2018}), compare it with NLO, which omits one neighbor block on each side of the main block (\citealp{semenova2023inference}). Normalizing each fold to one observation, the nuisance-sample usage shares $u$ are
\[
u_{\mathrm{RCF}}(K)=
\begin{cases}
\dfrac{3K-2}{4K}, & K \text{ even},\\[4pt]
\dfrac{3(K^2-1)}{4K^2}, & K \text{ odd},
\end{cases}
\qquad 
u_{\mathrm{NLO}}(K)=\frac{(K-1)(K-2)}{K^2}.
\]

RCF uses more data than NLO for $K=3,\ldots,9$, the two coincide at $K=11$, and NLO uses more data at $K=10$ and for $K\geq 12$. Thus, RCF is not uniformly dominant, but is especially attractive in the moderate-$K$ range relevant for short macroeconomic samples. Its gain comes from two features: time reversibility allows time-reversed auxiliary blocks to target the same population nuisance functions, and the absence of buffer blocks increases usable training observations. The inferential cost of adjacent auxiliary blocks is handled separately by the conditional stability assumption in Section~\ref{sec: assumptions}.

\subsection{The RCF-DML estimator}\label{sec: rcf-dml estimator}
DML targets a finite-dimensional parameter $\theta_0\in\Theta\subset\mathbb R^d$ while controlling for possibly high-dimensional nuisance objects $\eta_0$. Let $\psi:\mathcal W\times\Theta\times\mathcal H\to\mathbb R^d$ be a score function satisfying Neyman orthogonality, the local invariance of $\theta_0$ to errors in the approximation of $\eta_0$:
\begin{equation}
\E[\psi(W_t;\theta_0,\eta_0)] = 0,
\qquad
\partial_\eta \E[\psi(W_t;\theta_0,\eta)]\big|_{\eta=\eta_0} = 0,
\label{eq:orth}
\end{equation}
and let $A := \E[\partial_\theta \psi(W_t;\theta_0,\eta_0)]$ be nonsingular. 
Using the auxiliary sets defined in Section~\ref{sec:mechanics}, RCF trains
the nuisance functions outside the main block \(B_k\). If \(K\) is even, the
auxiliary sample is \(B_k^R\) for \(k\le K/2\) and \(B_k^L\) for \(k>K/2\). If
\(K\) is odd, it is \(B_k^R\) for \(k<(K+1)/2\), \(B_k^L\) for
\(k>(K+1)/2\), and \(B_k^L\cup B_k^R\) for the central fold
\(k=(K+1)/2\). Thus, nuisance estimation and score evaluation are performed on
disjoint observations.

Within $B_k$, use the nuisance estimates $\hat\eta^{(k)}$ to construct residualized outcome and policy series, $(\hat{\chi}_{t,k},\hat{\xi}_{t,k})_{t\in B_k}$, and compute the residual-on-residual OLS slope
\[
\hat\theta_k 
:= 
\arg\min_{\theta\in\Theta} 
\frac{1}{|B_k|}
\sum_{t\in B_k}(\hat{\chi}_{t,k}-\theta\,\hat{\xi}_{t,k})^2.
\]
Equivalently, $\hat\theta_k$ solves the fold-specific empirical moment condition
\[
\frac{1}{|B_k|}\sum_{t\in B_k} \psi(W_t;\hat\theta_k,\hat\eta^{(k)}) =  0
\]

The RCF-DML estimator is the average of the $K$ fold-specific estimates,
\begin{equation}
\hat\theta := \frac{1}{K}\sum_{k=1}^K \hat\theta_k .
\label{eq:thetahat}
\end{equation}

\subsection{Assumptions for valid inference}\label{sec: assumptions}
Let $\psi_t^\star := \psi(W_t;\theta_0,\eta_0)$ denote the oracle score, let
$\Gamma(h):=\Cov(\psi_t^\star,\psi_{t-h}^\star)$,
and define the long-run variance
\begin{equation}
\Sigma := \sum_{h=-\infty}^{\infty}\Gamma(h),
\qquad
\sum_{h=-\infty}^{\infty}\|\Gamma(h)\|<\infty.
\label{eq:Sigma}
\end{equation}
Thus, $\Sigma$ is the time-series analog of $\Var(\psi_t^\star)$ in the i.i.d.\ case.

Valid inference for the RCF-DML estimator relies on the following assumptions.

\begin{assumption}[Finite HAC variance]\label{ass:LRV}
$\E\|\psi_t^\star\|^2<\infty$ and $\Sigma$ in \eqref{eq:Sigma} exists and is finite (equivalently $\sum_{h}\|\Gamma(h)\|<\infty$).
\end{assumption}

Assumption \ref{ass:LRV} rules out explosive or infinite-variance behavior and ensures that the oracle score admits a well-defined HAC variance.

\begin{assumption}[Invariance principle for the oracle score]\label{ass:FCLT}
In the full sample space $D([0,1];\R^d)$,
\[
\frac{1}{\sqrt{T}}\sum_{t=1}^{\lfloor uT\rfloor}\psi_t^\star 
\ \Rightarrow\ 
\mathcal{W}(u),
\]
where $\mathcal{W}$ is a mean-zero Gaussian process with covariance kernel
$\E[\mathcal{W}(u)\mathcal{W}(v)^\top]=(u\wedge v)\,\Sigma$.
\end{assumption}

Assumption \ref{ass:FCLT} is a functional central limit theorem (FCLT) for the oracle score: the partial-sum process of $\psi_t^\star=\psi(W_t;\theta_0,\eta_0)$, normalized by $\sqrt{T}$, converges weakly in the space $D([0,1];\R^d)$ to a multivariate Brownian motion with covariance kernel $(u\wedge v)\Sigma$.\footnote{$u,v\in[0,1]$ denote rescaled time indices, representing fractions of the sample, and $(u\wedge v)\Sigma$ is the covariance kernel of the limiting Brownian motion: the variance grows linearly with time $(u)$, while $\Sigma$ captures the contemporaneous dependence across the $d$ components of the score vector.}
This FCLT yields the asymptotic normality of the RCF-DML estimator under serial dependence (\citealp{Billingsley1968,Hansen1982,Herrndorf1984}). Sufficient conditions include the Maxwell-Woodroofe projective criterion (\citealp{MaxwellWoodroofe2000}) and summable $\alpha$-mixing. Some processes satisfy these weak-dependence conditions together with time reversibility, such as stationary Gaussian linear processes with summable autocovariances. Reversibility, however, is neither necessary nor sufficient for the FCLT and should not be conflated with weak dependence.

\begin{assumption}[Smoothness and Neyman orthogonality]\label{ass:smooth}
\ 

\textbf{(i) Differentiability and orthogonality:} $\psi(W;\theta,\eta)$ is Gateaux-differentiable in $(\theta,\eta)$ near $(\theta_0,\eta_0)$ with integrable derivatives. It admits an $L_2$ expansion in $\eta$ with a quadratic remainder:
\[
\psi(W;\theta_0,\hat\eta)-\psi(W;\theta_0,\eta_0)
=\partial_\eta\psi(W;\theta_0,\eta_0)[\hat\eta-\eta_0]
+ r(W;\hat\eta,\eta_0),
\]
with $\E\|r\|^2\lesssim \|\hat\eta-\eta_0\|_{L_2}^4$. Moreover,
$\E[\partial_\eta\psi(W;\theta_0,\eta_0)[h]]=0$ for all admissible directions $h$ (Neyman orthogonality).

\textbf{(ii) Derivative stationarity and conditional short-memory:} The derivative processes
$\partial_\theta\psi(W_t;\theta_0,\eta_0)$ and $\partial_\eta\psi(W_t;\theta_0,\eta_0)$
are covariance-stationary with finite second moments and summable autocovariances:
\[
\E\|\partial_\theta\psi(W_t;\theta_0,\eta_0)\|^2 < \infty,\quad
\sum_{h}\|\Gamma_{\partial_\theta\psi}(h)\| < \infty,
\]
where $\Gamma_{\partial_\theta\psi}(h) := \Cov(\partial_\theta\psi(W_t;\theta_0,\eta_0), \partial_\theta\psi(W_{t-h};\theta_0,\eta_0))$; and similarly for $\partial_\eta\psi$ (using operator norm). Moreover, for
\[
\mathcal G_t:=\partial_\eta\psi(W_t;\theta_0,\eta_0),
\qquad
\bar{ \mathcal G}_{t,k}:=\E[\mathcal{G}_t\mid\mathcal F_{\mathrm{aux},k}],
\]
the centered $\eta$-derivative array satisfies the conditional short-memory bound
\[
\sum_{t,s\in B_k}
\E\!\left[
\|\mathcal G_t-\bar{\mathcal G}_{t,k}\|_{\mathrm{op}}
\|\mathcal G_s-\bar{\mathcal G}_{s,k}\|_{\mathrm{op}}
\mid \mathcal F_{\mathrm{aux},k}
\right]
=O_p(|B_k|),
\]
uniformly in $k$.

\textbf{(iii) Average Jacobian stability:} For the fold-specific nuisance estimators \(\hat\eta^{(k)}\),
\[
\sup_{1\le k\le K}
\left\|
\frac{1}{|B_k|}\sum_{t\in B_k}
\bigl[
\partial_\theta\psi(W_t;\theta_0,\hat\eta^{(k)})
-
\partial_\theta\psi(W_t;\theta_0,\eta_0)
\bigr]
\right\|
=o_p(1).
\]
For non-affine scores, the same condition holds locally uniformly in
\(\theta\) around \(\theta_0\).

\textbf{(iv) Identification.} The Jacobian $A \;:=\; \E\!\big[\partial_\theta\psi(W_t;\theta_0,\eta_0)\big] \;\in\; \R^{d\times d}$
exists, is finite, and is nonsingular, with $\|A^{-1}\|_{\mathrm{op}}<\infty$.
\end{assumption}

Assumption \ref{ass:smooth} ensures that the score is locally linear in the nuisance parameter and Neyman-orthogonal at $(\theta_0,\eta_0)$, so nuisance estimation errors enter only at second order. Part (i) thus establishes that approximation errors are asymptotically negligible at the $\sqrt{T}$ scale. Part (ii) extends the weak-dependence and continuity requirements to the score derivatives, allowing for valid Jacobian and variance calculations under serial dependence. The conditional short-memory clause is the conditional analog of summable dependence and is verified for the stable linear DGPs considered below. Part (iii) ensures stability of the fold-specific Jacobian, which is needed for
the per-fold linearization. Part (iv) ensures invertibility of the Jacobian.\footnote{The summability of autocovariances parallels the HAC conditions in \citet{Andrews1991} and holds automatically for mixing processes (\citealt{Hansen1982}).}

\begin{assumption}[Dependent cross-fit accuracy and conditional stability]\label{ass:rate}
For each fold $k$,
\[
\|\hat\eta^{(k)}-\eta_0\|_{L_2}=o_p(T^{-1/4}), 
\qquad 
\sup_k\|\hat\eta^{(k)}\|_{L_2}=O_p(1).
\]
Let $\mathcal F_{\mathrm{aux},k}$ denote the $\sigma$-field generated by the auxiliary blocks used to train $\hat\eta^{(k)}$. Then
\[
\Bigg\|
\E\!\Big[
\frac{1}{|B_k|}\sum_{t\in B_k}
\{\psi(W_t;\theta_0,\hat\eta^{(k)})-\psi(W_t;\theta_0,\eta_0)\}
\ \Big|\ \mathcal{F}_{\mathrm{aux},k}
\Big]
\Bigg\| = o_p(T^{-1/2}).
\]
\end{assumption}
Assumption~\ref{ass:rate} combines the standard DML ($L_2$) nuisance-rate condition ($o_p(T^{-1/4})$) and uniform boundedness of fold-specific nuisance estimators with a time-series-specific conditional stability requirement. Because auxiliary and main blocks need not be independent, valid inference requires the block-average conditional plug-in bias, given the training data used to construct $\hat\eta^{(k)}$, to be negligible at the $\sqrt{T}$ scale. This replaces the fold-independence logic used in i.i.d.\ settings. Intuitively, the condition requires that training on non-overlapping but adjacent auxiliary blocks does not induce systematic bias in the scores computed on the main block, the time-series analog of the no-data-leakage principle in statistical learning.

\begin{remark}[On non-independence]\label{remark:non-independence}
RCF does not require the auxiliary and main blocks to be independent. Its baseline implementation uses the available observations outside the main block for nuisance estimation, without neighbor-deletion between training and test samples. Assumption~\ref{ass:rate} makes this permissible: after conditioning on the auxiliary sample used to train the nuisance functions, the score on the main block must differ from the oracle score only through second-order nuisance estimation terms.
\end{remark}

In the PLR case, with scalar policy variable, writing the nuisance errors $\Delta m_t^{(k)}
=
\hat m^{(k)}(X_t)-m_0(X_t)$ and $
\Delta g_t^{r,(k)}
=
\hat g^{r,(k)}(X_t)-g_0^r(X_t)$,
yields the plug-in score expansion (see Supplement S.2)
\[
\psi(W_t;\theta_0,\hat\eta^{(k)})
-
\psi(W_t;\theta_0,\eta_0)
=
-\xi_t\Delta g_t^{r,(k)}
+
(\theta_0\xi_t-\epsilon_t)\Delta m_t^{(k)}
+
\Delta m_t^{(k)}\Delta g_t^{r,(k)}
-
\theta_0\{\Delta m_t^{(k)}\}^2 .
\]
Thus, the condition rules out \emph{first-order leakage}: after conditioning on $X_t$, the auxiliary training sample must not predict the conditional mean of the main-block residual innovations. A sufficient formulation is
\[
\E[\xi_t\mid X_t,\mathcal F_{\mathrm{aux},k}]=0,
\qquad
\E[\epsilon_t\mid X_t,\mathcal F_{\mathrm{aux},k}]=0.
\tag{Conditional Stability}
\]
Under this condition, the first-order terms vanish in conditional expectation, and the remaining quadratic terms are bounded by $
\|\hat m^{(k)}-m_0\|_{L_2}
\|\hat g^{r,(k)}-g_0^r\|_{L_2}
+
|\theta_0|
\|\hat m^{(k)}-m_0\|_{L_2}^2,
$
which is $o_p(T^{-1/2})$ under the usual $o_p(T^{-1/4})$ nuisance rates.

\label{par:cond_stability_scope}
The scope of the condition is broad. It holds for stationary linear processes, such as stable VAR, SVAR, and related approximately linear environments, when conditioning on $X_t$ (estimating $m_0$ and $g^r_0$) absorbs the predictable component relevant for the PLR equations. More generally, it is plausible whenever $X_t$ is a sufficiently rich state vector for the dynamics relevant to $(y_t,d_t)$, including observed state-space systems, observed-regime Markov switching environments and factor structures for which the conditioning set contains sufficiently informative factor proxies. The benchmark SDFM design in the simulations is constructed to satisfy this case approximately.

Conditional stability may fail when auxiliary blocks contain information about persistent components not absorbed by $X_t$. Canonical examples include omitted persistent states, residual serial dependence after conditioning, asymmetric volatility structures in which volatility affects the conditional mean, and contemporaneously endogenous regimes. These cases are discussed formally and tested in Supplement~S2 and S3.

The appropriate remedy for departures from conditional stability depends on the source of the violation. Short-memory leakage can be addressed by neighbor deletion in the spirit of NLO CF, inserting a buffer around each main block with size depending on the persistence of the leakage. If the problem instead reflects omitted persistent states, deleting neighbors is not enough; the conditioning set must be enriched with additional lags, factor estimates, regime indicators, or other state proxies. In practice, buffer sizes can be guided by residual diagnostics that test whether leads or lags of adjacent auxiliary residuals predict the main block residuals. Supplement~S2 provides a more detailed discussion.

\begin{assumption}[Block construction]\label{ass:blocks}
The blocks $\{B_k\}_{k=1}^K$ are adjacent, \\ non-overlapping, and cover $\{1,\dots,T\}$ up to $O(1)$ edge corrections. For each fold $k$, the auxiliary sample excludes all indices in $B_k$.
\end{assumption}

\begin{remark}[Choice of $K$]\label{rem:Kfixed}
In finite samples, the choice of $K$ must balance the size of the main and auxiliary blocks against sample use. In the asymptotic theory, however, $K$ is held fixed, since the large-sample properties of the RCF-DML estimator do not depend on how the sample is partitioned across folds.
\end{remark}

The assumptions above operate at distinct logical levels and may fail separately. Assumptions \ref{ass:LRV}--\ref{ass:smooth} impose weak dependence, smoothness, and orthogonality conditions ensuring asymptotic linearity, asymptotic normality, and HAC-consistent inference; these are generic requirements for time-series DML. Assumption~\ref{ass:rate} is the distinct RCF-specific condition. It controls the conditional plug-in bias created by using adjacent auxiliary and main blocks without buffer deletion. Time reversibility serves a different role: it licenses the construction of time-reversed auxiliary samples targeting the same population nuisance functions, but it neither implies weak dependence nor guarantees conditional stability. These components should therefore be assessed separately.

\subsection{Asymptotic properties}\label{sec: asymp}
We derive the asymptotic distribution of the RCF-DML estimator under serial dependence. The estimator admits an oracle score linear representation, is $\sqrt{T}$-consistent and asymptotically normal, and supports HAC-consistent inference. These results deliver valid inference without fold independence or buffer gaps between training and main blocks. Proofs are collected in Appendix A.

\begin{lemma}[Orthogonality remainder under dependence]\label{lem:remainder}
For each fold $k$, define the block-average discrepancy
\[
R_k \;:=\; \frac{1}{|B_k|}\sum_{t\in B_k}\Big\{\psi(W_t;\theta_0,\hat\eta^{(k)})-\psi(W_t;\theta_0,\eta_0)\Big\}.
\]
Under Assumptions \ref{ass:smooth}--\ref{ass:rate}, $R_k=o_p(T^{-1/2})$ uniformly in $k$. Consequently,
\begin{equation}
\frac{1}{|B_k|}\sum_{t\in B_k}\psi(W_t;\theta_0,\hat\eta^{(k)})
\;=\;
\frac{1}{|B_k|}\sum_{t\in B_k}\psi(W_t;\theta_0,\eta_0)
\;+\;o_p(T^{-1/2}),
\quad k=1,\dots,K,
\label{eq:block-oracle}
\end{equation}
and the average over $k$ also satisfies
\begin{equation}
\frac{1}{K}\sum_{k=1}^K \frac{1}{|B_k|}\sum_{t\in B_k}\psi(W_t;\theta_0,\hat\eta^{(k)})
\;=\;
\frac{1}{K}\sum_{k=1}^K \frac{1}{|B_k|}\sum_{t\in B_k}\psi(W_t;\theta_0,\eta_0)
\;+\;o_p(T^{-1/2}).
\label{eq:avg-block-oracle}
\end{equation}
\end{lemma}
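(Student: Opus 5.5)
The plan is to split $R_k$ into its conditional mean given the auxiliary $\sigma$-field and a conditionally centered fluctuation:
\[
R_k=\E[R_k\mid\mathcal F_{\mathrm{aux},k}]+\bigl(R_k-\E[R_k\mid\mathcal F_{\mathrm{aux},k}]\bigr).
\]
The first term is $o_p(T^{-1/2})$ directly by the conditional stability part of Assumption~\ref{ass:rate}. Since $\hat\eta^{(k)}$ is $\mathcal F_{\mathrm{aux},k}$-measurable, it can be treated as fixed when handling the second term. The whole argument then reduces to a conditional second-moment bound on the centered fluctuation, followed by the conditional Markov inequality. Because $K$ is fixed (Remark~\ref{rem:Kfixed}), uniformity in $k$ follows from a union bound over finitely many folds.

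For the fluctuation, apply the expansion in Assumption~\ref{ass:smooth}(i) pointwise in $t\in B_k$:
\[
\psi(W_t;\theta_0,\hat\eta^{(k)})-\psi(W_t;\theta_0,\eta_0)=\mathcal G_t[\hat\eta^{(k)}-\eta_0]+r(W_t;\hat\eta^{(k)},\eta_0).
\]
\emph{Linear term.} Its conditionally centered version is $\frac{1}{|B_k|}\sum_{t\in B_k}(\mathcal G_t-\bar{\mathcal G}_{t,k})[\hat\eta^{(k)}-\eta_0]$. Bounding the conditional second moment by the operator norms and invoking the conditional short-memory bound of Assumption~\ref{ass:smooth}(ii) gives
\[
\E\bigl[\|\cdot\|^2\mid\mathcal F_{\mathrm{aux},k}\bigr]\le |B_k|^{-2}\,O_p(|B_k|)\,\|\hat\eta^{(k)}-\eta_0\|^2 .
\]
Since $|B_k|\asymp T/K$, this is $O_p(T^{-1})\,o_p(T^{-1/2})$. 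The linear fluctuation is therefore $o_p(T^{-1/2})$, in fact of order $T^{-1/2}\,o_p(T^{-1/4})$.

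\emph{Remainder term.} Use the triangle (Minkowski) inequality on the block average, so no dependence structure is needed. Its conditional $L_2$ norm is at most $\max_t(\E[\|r_t\|^2\mid\mathcal F_{\mathrm{aux},k}])^{1/2}\lesssim\|\hat\eta^{(k)}-\eta_0\|_{L_2}^2=o_p(T^{-1/2})$, by the quadratic remainder bound in (i) and the rate in Assumption~\ref{ass:rate}. Centering at most doubles this bound. Combining the pieces gives $R_k=o_p(T^{-1/2})$ for each $k$, and hence uniformly in $k$.

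Display \eqref{eq:block-oracle} is then a restatement of $R_k=o_p(T^{-1/2})$. Display \eqref{eq:avg-block-oracle} follows by averaging over the fixed number $K$ of folds, since $\frac1K\sum_k R_k\le\max_k\|R_k\|=o_p(T^{-1/2})$.

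The main obstacle is the remainder term. The bound $\E\|r\|^2\lesssim\|\hat\eta-\eta_0\|^4$ is stated unconditionally, for a fixed $\eta$ and the marginal law of $W$. Applying it conditionally on $\mathcal F_{\mathrm{aux},k}$ requires that the conditional law of $W_t$, $t\in B_k$, not be too distorted relative to the stationary marginal, because the main and auxiliary blocks are adjacent. The first approach I would take is to read the $L_2$ norms in Assumptions~\ref{ass:smooth}--\ref{ass:rate} as conditional (fold-wise) norms, which is the standard DML convention with $\hat\eta$ fixed, and to note that this is exactly what conditional stability is designed to license. If that reading is not available, the fallback is to bound the remainder unconditionally: take expectations with $\hat\eta^{(k)}$ plugged in, and use the $\sup_k\|\hat\eta^{(k)}\|=O_p(1)$ bound together with a truncation argument to turn the pointwise quadratic bound into $o_p(T^{-1/2})$.
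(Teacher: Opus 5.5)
Your proposal is correct and follows essentially the same route as the paper's proof. You split $R_k$ into its conditional mean given $\mathcal F_{\mathrm{aux},k}$, which is $o_p(T^{-1/2})$ by conditional stability, and a centered fluctuation: its linear part is controlled by the conditional short-memory bound of Assumption~\ref{ass:smooth}(ii), giving $o_p(T^{-3/4})$, and its quadratic remainder is $O_p(\|\hat\eta^{(k)}-\eta_0\|_{L_2}^2)$. The obstacle you flag is handled in the paper exactly as in your first option, by reading the remainder bound of Assumption~\ref{ass:smooth}(i) conditionally on $\mathcal F_{\mathrm{aux},k}$ ("interpreted uniformly for the fold-specific perturbations"). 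Where you use Minkowski, the paper uses Cauchy--Schwarz plus conditional Jensen, which is an inessential difference.
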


\begin{lemma}[Per-fold OLS linearization]\label{lem:linear}
Let $\hat\theta_k$ satisfy the fold moment equation
$\frac{1}{|B_k|}\sum_{t\in B_k}\psi(W_t;\hat\theta_k,\hat\eta^{(k)})=0$ (residual-on-residual OLS slope on block $B_k$).
Under Assumptions \ref{ass:LRV}, 
\ref{ass:FCLT},
\ref{ass:smooth}, \ref{ass:rate}, \ref{ass:blocks} and Lemma~\ref{lem:remainder},
\[
\sqrt{|B_k|}\,(\hat\theta_k-\theta_0)
\;=\;
-A^{-1}\cdot \frac{1}{\sqrt{|B_k|}}\sum_{t\in B_k}\psi(W_t;\theta_0,\eta_0) \;+\; o_p(1),
\qquad k=1,\dots,K,
\]
\end{lemma}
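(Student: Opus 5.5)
The plan is a standard mean-value (exact, in the affine PLR case) expansion of the fold moment equation around $\theta_0$, with the nuisance perturbation handled by Lemma~\ref{lem:remainder}. Write $\hat\Psi_k(\theta):=|B_k|^{-1}\sum_{t\in B_k}\psi(W_t;\theta,\hat\eta^{(k)})$. For the residual-on-residual score $\psi(W;\theta,\eta)=(\chi-\theta\xi)\xi$ with estimated residuals, $\hat\Psi_k$ is affine in $\theta$. Hence
\[
0=\hat\Psi_k(\hat\theta_k)=\hat\Psi_k(\theta_0)+\hat J_k(\hat\theta_k-\theta_0),\qquad
\hat J_k:=\frac{1}{|B_k|}\sum_{t\in B_k}\partial_\theta\psi(W_t;\theta_0,\hat\eta^{(k)}),
\]
exactly. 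For non-affine scores I would use a mean-value expansion at an intermediate $\tilde\theta_k$ instead, and invoke the locally uniform version of Assumption~\ref{ass:smooth}(iii) once consistency of $\hat\theta_k$ is established. Consistency itself follows from the argument below applied first with an $O_p$ bound.

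The first step is the Jacobian. Decompose
\[
\hat J_k-A=\Big[\hat J_k-\tfrac{1}{|B_k|}\textstyle\sum_{t\in B_k}\partial_\theta\psi(W_t;\theta_0,\eta_0)\Big]
+\Big[\tfrac{1}{|B_k|}\textstyle\sum_{t\in B_k}\partial_\theta\psi(W_t;\theta_0,\eta_0)-A\Big].
\]
The first bracket is $o_p(1)$ by Assumption~\ref{ass:smooth}(iii). The second is a block average of a covariance-stationary process with summable autocovariances (Assumption~\ref{ass:smooth}(ii)). Its variance is therefore $O(|B_k|^{-1})$, and Chebyshev gives $o_p(1)$, since $|B_k|\asymp T/K\to\infty$ with $K$ fixed by Assumption~\ref{ass:blocks} and Remark~\ref{rem:Kfixed}. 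Nonsingularity of $A$ (Assumption~\ref{ass:smooth}(iv)) and continuity of inversion then give $\hat J_k^{-1}=A^{-1}+o_p(1)$, with $\hat J_k$ invertible with probability approaching one.

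The second step is the score term. By \eqref{eq:block-oracle} of Lemma~\ref{lem:remainder}, $\hat\Psi_k(\theta_0)=|B_k|^{-1}\sum_{t\in B_k}\psi_t^\star+o_p(T^{-1/2})$. Multiplying by $\sqrt{|B_k|}$, and using $\sqrt{|B_k|}\,T^{-1/2}\to K^{-1/2}$, the remainder stays $o_p(1)$. This gives
\[
\sqrt{|B_k|}\,\hat\Psi_k(\theta_0)=S_k+o_p(1),\qquad S_k:=|B_k|^{-1/2}\sum_{t\in B_k}\psi_t^\star.
\]
Next, $S_k=O_p(1)$. Under Assumption~\ref{ass:FCLT}, $S_k=\sqrt{T/|B_k|}\,\{\text{increment of the partial-sum process over }B_k\}$, which converges in distribution to $N(0,\Sigma)$ by the continuous mapping theorem. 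Alternatively, Assumption~\ref{ass:LRV} alone gives $\Var(S_k)\to\Sigma$, so tightness follows from Chebyshev.

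Combining the two steps,
\[
\sqrt{|B_k|}(\hat\theta_k-\theta_0)=-\hat J_k^{-1}\{S_k+o_p(1)\}=-A^{-1}S_k-(\hat J_k^{-1}-A^{-1})S_k+o_p(1)=-A^{-1}S_k+o_p(1).
\]
The last equality holds because $o_p(1)\cdot O_p(1)=o_p(1)$, which is the claim. Since $K$ is fixed, uniformity over $k$ is automatic.

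The main obstacle is the Jacobian step. I need to justify that the plug-in Jacobian on the main block, which is built from $\hat\eta^{(k)}$ trained on adjacent dependent data, is close to $A$. Assumption~\ref{ass:smooth}(iii) is stated precisely to absorb this, so the remaining delicate point is the non-affine case. There, consistency of $\hat\theta_k$ must be obtained before applying the local uniformity. I would first derive $\hat\theta_k-\theta_0=O_p(|B_k|^{-1/2})$ from identification and the score bound, and only then apply the uniform Jacobian condition on a shrinking neighborhood.
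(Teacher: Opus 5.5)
Your proposal is correct and follows essentially the same route as the paper's proof. Both arguments use an exact expansion of the fold moment equation around $\theta_0$ (the paper writes it in integral mean-value form $\tilde A_k$ and notes it collapses to $\hat A_k(\theta_0)$ for the affine PLR score), reduce the score term to the oracle block average via Lemma~\ref{lem:remainder}, and split the plug-in Jacobian into the same two terms handled by Assumption~\ref{ass:smooth}(iii) and (ii); both then bound the normalized oracle block sum as $O_p(1)$ via Assumption~\ref{ass:FCLT} and conclude with Slutsky, deferring the non-affine case to the local-uniform version of Assumption~\ref{ass:smooth}(iii), as you do.
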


\begin{theorem}[Fold-average RCF attains oracle asymptotic variance]\label{thm:oracle}
Let $\hat\theta = K^{-1}\sum_{k=1}^K \hat\theta_k$ with $\hat\theta_k$ as above. Under Assumptions \ref{ass:LRV}--\ref{ass:blocks} and the invariance principle (Assumption~\ref{ass:FCLT}),
\[
\sqrt{T}(\hat\theta-\theta_0)\ \Rightarrow\ \mathcal N(0,V),
\qquad V:=A^{-1}\Sigma (A^{-1})^\top,
\]
where $\Sigma$ is the long-run variance in \eqref{eq:Sigma}.
\end{theorem}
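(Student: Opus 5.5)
The plan is to combine the per-fold linearization of Lemma~\ref{lem:linear} with the invariance principle of Assumption~\ref{ass:FCLT}. Averaging these linearizations turns $\hat\theta-\theta_0$ into a single partial sum of the oracle score, and that sum has the stated Gaussian limit.

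\textbf{Step 1: rescale the per-fold linearization.} Under Assumption~\ref{ass:blocks}, $|B_k| = T/K + O(1)$ with $K$ fixed, so $\sqrt{T/|B_k|}\to\sqrt{K}$. Multiplying Lemma~\ref{lem:linear} by $\sqrt{T}/\sqrt{|B_k|}$ gives
\[
\sqrt{T}(\hat\theta_k-\theta_0) = -A^{-1}\frac{\sqrt{T}}{|B_k|}\sum_{t\in B_k}\psi_t^\star + o_p(1).
\]

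\textbf{Step 2: average over folds.} Because $K$ is fixed, the $K$ remainders sum to $o_p(1)$. Since $\sum_{t\in B_k}\psi_t^\star = O_p(\sqrt{T})$ by Assumption~\ref{ass:FCLT}, we may replace $1/|B_k|$ by $K/T$ at a cost of $O(T^{-2})\cdot O_p(\sqrt T)\cdot\sqrt T$, which is negligible. This yields
\[
\sqrt{T}(\hat\theta-\theta_0) = -A^{-1}\frac{1}{\sqrt{T}}\sum_{k=1}^K\sum_{t\in B_k}\psi_t^\star + o_p(1) = -A^{-1}\frac{1}{\sqrt T}\sum_{t=1}^{T}\psi_t^\star + o_p(1).
\]
The second equality uses that the blocks cover $\{1,\dots,T\}$. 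The $O(1)$ edge indices contribute $O_p(T^{-1/2})$, since each $\psi_t^\star$ has a finite second moment by Assumption~\ref{ass:LRV}.

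\textbf{Step 3: take the limit.} Assumption~\ref{ass:FCLT} at $u=1$, together with the continuous mapping theorem, gives $T^{-1/2}\sum_{t=1}^T\psi_t^\star\Rightarrow\mathcal W(1)\sim\mathcal N(0,\Sigma)$. Slutsky's lemma then yields $\sqrt T(\hat\theta-\theta_0)\Rightarrow\mathcal N(0,A^{-1}\Sigma (A^{-1})^\top)$. An equivalent route is to write each block sum as an increment $\mathcal W(k/K)-\mathcal W((k-1)/K)$; the increments are independent with covariance $\Sigma/K$, and summing them recovers $\mathcal W(1)$.

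The main obstacle is to make sure that the $o_p(1)$ terms from Lemma~\ref{lem:linear} are legitimate at the $\sqrt T$ scale for every fold. In particular, the Jacobian replacement must be uniform in $k$, which Assumption~\ref{ass:smooth}(iii) provides. The conditional-bias control of Lemma~\ref{lem:remainder} must also hold uniformly, which Assumption~\ref{ass:rate} gives. With $K$ fixed, a union over the folds suffices, so no uniformity beyond finitely many folds is needed. The key point to stress is that no independence across folds is used: the fold sums are recombined into the full-sample oracle partial sum, and the FCLT handles the serial dependence between adjacent blocks.
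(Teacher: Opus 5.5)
Your proposal is correct and follows essentially the same route as the paper: you average the per-fold linearizations from Lemma~\ref{lem:linear}, treat both the $1/|B_k|$ versus $K/T$ reweighting and the $O(1)$ edge indices as negligible, and then apply the FCLT at $u=1$ together with Slutsky. The only cosmetic difference is that you rescale each fold to the $\sqrt{T}$ scale before averaging, whereas the paper splits $K^{-1}\sum_k \bar m_k$ into a $|B_k|/T$-weighted term and an $O_p(T^{-3/2})$ reweighting remainder.
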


Although the main blocks \(B_1,\dots,B_K\) are disjoint, the oracle scores
computed on them inherit the serial dependence of \(\{W_t\}\). The fold-specific
estimators satisfy
\[
\hat\theta_k
\;\approx\;
\theta_0 - A^{-1}\frac{1}{|B_k|}\sum_{t\in B_k}\psi(W_t;\theta_0,\eta_0),
\]
so their average
\(\hat\theta=K^{-1}\sum_{k=1}^K\hat\theta_k\) has asymptotic covariance
determined by the long-run variance of the oracle score process. Valid inference
therefore requires estimating \(\Sigma\) in a way that captures the serial
dependence of the stacked, time-ordered score sequence, including dependence
across adjacent main-block boundaries. We do so with an HAC estimator
constructed from the stacked, time-ordered, cross-fitted score series
\(\{\hat s_t\}_{t=1}^T\), rather than from foldwise variance estimates. A full
description of the variance estimator, together with sensitivity checks for
alternative kernels and bandwidth choices, is reported in the supplementary
appendix.

\section{Stability-based tuning of nuisance parameters}
\label{sec:goldilocks}

To tune nuisance parameters, we target a locally stable region of the predictive error profile rather than the single hyperparameter minimizing predictive error. The idea is to select within a region where predictive performance is good and varies little across nearby penalties, thereby avoiding sharp local minima often associated with overfitting or excessive shrinkage. We refer to this region as the \emph{Goldilocks zone}.

Let \(\Lambda=\{\lambda_1,\ldots,\lambda_M\}\) be an ordered grid of scalar
hyperparameters, with \(\lambda_i<\lambda_{i+1}\). For fold \(k\), let
\(\mathcal A_k\) denote the auxiliary sample and let
\(\mathcal V_k\subset\mathcal A_k\) be an auxiliary validation block adjacent
to \(B_k\). For each \(\lambda_i\), define \(\mathcal R(\lambda_i)\) as the
out-of-sample RMSE obtained by training the nuisance estimate on
\(\mathcal A_k\setminus\mathcal V_k\) and validating it on
\(\mathcal V_k\). Thus, the validation block is held out from the preliminary
fit used to compute \(\mathcal R(\lambda_i)\), and the main block \(B_k\) is
not used for tuning.

For a window size \(S\), define the \(j\)-th local window
\(\mathcal W_j=\{j,j+1,\ldots,j+S-1\}\) for
\(j=1,\ldots,M-S+1\), and let
\begin{equation}
\bar{\mathcal R}_j=\frac{1}{S}\sum_{i\in\mathcal W_j}\mathcal R(\lambda_i),
\qquad
V_j=\frac{1}{S}\sum_{i\in\mathcal W_j}
\big(\mathcal R(\lambda_i)-\bar{\mathcal R}_j\big)^2,
\end{equation}
so that \(\bar{\mathcal R}_j\) measures local predictive performance and
\(V_j\) local instability. To make the two objects comparable, we normalize
them across admissible windows using min--max transformations,
\[
\tilde V_j=\frac{V_j-\min_\ell V_\ell}{\max_\ell V_\ell-\min_\ell V_\ell},
\qquad
\tilde{\bar{\mathcal R}}_j=
\frac{\bar{\mathcal R}_j-\min_\ell \bar{\mathcal R}_\ell}
{\max_\ell \bar{\mathcal R}_\ell-\min_\ell \bar{\mathcal R}_\ell},
\]
and score each window by
\begin{equation}
\mathcal S_j=\tilde V_j+\tilde{\bar{\mathcal R}}_j.
\label{eq:goldilocks}
\end{equation}
The Goldilocks zone index and associated window are
\begin{equation}
j^*=\arg\min_j \mathcal S_j,
\qquad
\mathcal W^*:=\mathcal W_{j^*},
\end{equation}
and the selected hyperparameter is the RMSE minimizer within that window,
\begin{equation}
\lambda^*=\arg\min_{\lambda_i:\, i\in\mathcal W^*}\mathcal R(\lambda_i).
\end{equation}
After \(\lambda^*\) is selected, the final nuisance estimate
\(\hat\eta^{(k)}(\lambda^*)\) used in the score is refit on the auxiliary
sample \(\mathcal A_k\). Hence \(B_k\) is never used for tuning or nuisance
fitting, and all tuning and nuisance-fitting steps remain measurable with
respect to the auxiliary \(\sigma\)-field.

This rule favors hyperparameters that combine good predictive performance with local stability, rather than hyperparameters located at sharp predictive minima. In the benchmark implementation, we set the minimal admissible window size \(S=3\); sensitivity to alternative window sizes is reported in the supplementary appendix.
 
The rule is consistent with the DML framework. First, Neyman orthogonality makes the score locally insensitive to nuisance perturbations, so that small deviations of \(\hat\eta\) from \(\eta_0\) enter only at second order. Selecting a locally stable region of the predictive error profile is consistent with this logic: within a window where \(\mathcal R(\lambda)\) varies little, nearby penalties induce similar predictive behavior. Second, that remainder is controlled by the \(o_p(T^{-1/4})\) rate requirement in Assumption~\ref{ass:rate} (\citealp{Chernozhukov2018}). For penalized linear learners such as Lasso and Elastic Net, this rate is typically attained over a range of regularization parameters. Standard predictive tuning may then select a penalty that remains rate-admissible but is poorly suited to second stage inference, for example by over-shrinking the policy equation and attenuating residual treatment variation in finite samples (\citealp{mcgrath2025optimal}). More generally, nuisance models cannot be too simple, leaving residual confounding, nor too complex, absorbing treatment variation or overfitting noise; this balance is the Goldilocks zone of model complexity (\citealp{fort2019goldilocks, snyder2024goldilocks}), in which \(\hat{m}_0(X_t)\) removes confounding while preserving sufficient policy variation and \(\hat{g}_0^{\,r}(X_t)\) denoises without attenuating policy-induced signals. Selecting within a locally stable region of the predictive error profile targets this zone while remaining inside the rate-admissible interval. Third, the proof architecture conditions on the auxiliary \(\sigma\)-field under the premise that the nuisance estimate used in the score is measurable with respect to the auxiliary sample alone. Because the validation RMSEs \(\mathcal R(\lambda_i)\), the window scores \(\mathcal S_j\), \(j^*\), \(\lambda^*\), and the refitted nuisance estimate \(\hat\eta^{(k)}(\lambda^*)\) are all functions of the auxiliary data only, this premise is preserved, and the guarantees of Theorem~\ref{thm:oracle} continue to apply.

\section{Simulations}\label{Simulations}
In this section, we present simulation evidence on the performance of RCF-DML using the benchmark contemporaneously recursive SVAR DGP
\begin{equation}
    Y_t = \mu + \Phi_1 Y_{t-1} + \varepsilon_t, 
    \qquad 
    \varepsilon_t = P u_t, \qquad 
    u_t \sim \mathcal{N}(0,I_n),
    \label{f_sim_SVAR}
\end{equation}
where $Y_t \in \mathbb{R}^n$ and $P$ is a lower-triangular Cholesky factor encoding contemporaneous impacts. The autoregressive matrix $\Phi_1$ induces approximate sparse local dependence via asymmetric, band-limited decay. We consider $p = 100$ covariance-stationary variables and conduct 10{,}000 Monte Carlo simulations. Details on this DGP and on the alternative designs are provided in Supplement S3.
The outcome is ordered last, so it never enters the policy equation contemporaneously. The benchmark results reported in the main text are based on a mis-specified PLR design, meant to mimic empirical settings in which the contemporaneous structure is not known a priori: the policy variable is placed in the middle of the recursive ordering, while the estimated PLR still conditions on the full contemporaneous control set. Because this set may include variables that are contemporaneously downstream of the policy variable, the PLR coefficient need not coincide with the total recursive SVAR impact response. The benchmark should therefore be interpreted as a misspecification stress test rather than exact recovery of the structural target. For comparison, the supplementary appendix also reports results for the correctly specified design in which the policy variable is placed immediately before the outcome, so that the contemporaneous control set coincides with the truly relevant one.

\subsubsection*{Large sample, low complexity.}
We first evaluate the large-sample performance of the RCF-DML estimator by fixing the sample size at $T = 1{,}000$, which approximates the asymptotic regime. Nuisance functions are estimated via RMSE-tuned Lasso\footnote{Because the SVAR design is low-complexity relative to the sample size, tuning the Lasso penalty via the Goldilocks zone offers no improvement, such that the optimal parameter $\lambda^\star$ is selected using the standard predictive criterion.} , and standard errors use a Newey-West HAC kernel with bandwidth $m = \min(h+1,\, 24)$, where $h$ denotes the LP horizon. In non-dynamic simulations, this bandwidth collapses to $m = 1$.

The results in Table \ref{tab:simulation_results_asymp} show that the RCF-DML estimator attains nominal coverage and remains asymptotically stable, with percentage bias close to 1.5\%. The correct coverage and small bias indicate that the FCLT operates as expected, yielding asymptotic normality. Moreover, performance is stable across the range of fold choices. Taken together, these findings confirm the analytical properties established for the RCF-DML procedure and validate the HAC variance estimation strategy adopted in the asymptotic analysis.

\begin{table}[H]
  \makeatletter
  \long\def\@makecaption#1#2{%
    \vskip\abovecaptionskip
    \sbox\@tempboxa{#1: #2}%
    \ifdim \wd\@tempboxa >\hsize
      #1: #2\par
    \else
      #1: #2\par
    \fi}
  \makeatother
  \caption{\label{tab:simulation_results_asymp}Simulation results for bias (\%) and 95\% coverage (Asymptotic).}
  \centering
    \resizebox{\textwidth}{!}{
    \begin{tabular}{@{}lccccccccc@{}}
      \hline\hline
       & K=4 & K=5 & K=6 & K=7 & K=8 & K=9 & K=10 & K=11 & K=12 \\
      \hline
      Bias & 1.4 & 1.5 & 1.7 & 1.7 & 1.5 & 1.4 & 1.6 & 1.6 & 1.6 \\
      Coverage & 94.0 & 94.1 & 94.1 & 94.3 & 94.2 & 94.3 & 93.8 & 94.2 & 93.4 \\
      \hline\hline
    \end{tabular}
    }
  
  \vspace{0.5em}
  \begin{minipage}{\textwidth}
    \footnotesize
    \renewcommand{\baselineskip}{11pt}
    \textbf{Note:} Bias (\%) and 95\% coverage across values of $K$ for sample size $T=1000$. Results are obtained over 10,000 Monte Carlo replications.
  \end{minipage}
\end{table}

\subsubsection*{Small samples, high complexity.}
We next document the bias-reduction gains achieved by the Goldilocks zone (GZ) tuning rule in finite samples relative to the standard predictive metric (RMSE), and compare the performance of the RCF-DML estimator with the alternative NLO CF strategy. We consider sample sizes $T \in \{50, 100, 200\}$ and different values of $K$. Using the complexity measure $c=p/T$ (\citealt{mcgrath2025optimal}), this design spans a range of dimensionality regimes, from high-dimensional settings ($c=2$ and $c=1$) to moderately complex ones ($c=0.5$).

The results in Table \ref{tab:simulation_results_finite} show clear bias-reduction benefits of the stability-based tuning rule (RCF-GZ), particularly in small samples, where high dimensionality amplifies overfitting risk. On average, RCF-GZ reduces bias by approximately $24\%$ relative to RCF-RMSE. Bias also decreases with larger values of $K$ under both tuning methods, though GZ consistently yields the greater improvement. Coverage is largely insensitive to the tuning criterion and remains close to nominal across values of $K$.

\begin{table}[htb]
  \makeatletter
  \long\def\@makecaption#1#2{%
    \vskip\abovecaptionskip
    \sbox\@tempboxa{#1: #2}%
    \ifdim \wd\@tempboxa >\hsize
      #1: #2\par
    \else
      #1: #2\par
    \fi}
  \makeatother
  \caption{\label{tab:simulation_results_finite}Simulation results for bias (\%) and 95\% coverage (Finite Sample).}
  \centering
    \resizebox{\textwidth}{!}{
    \begin{tabular}{@{}lcccccccccccccccccc@{}}
      \hline\hline
       & \multicolumn{6}{c}{T = 50} & \multicolumn{6}{c}{T = 100} & \multicolumn{6}{c}{T = 200} \\
      \cline{2-7} \cline{8-13} \cline{14-19}
      K & \multicolumn{2}{c}{RCF-GZ} & \multicolumn{2}{c}{RCF-RMSE} & \multicolumn{2}{c}{NLO-RMSE} & \multicolumn{2}{c}{RCF-GZ} & \multicolumn{2}{c}{RCF-RMSE} & \multicolumn{2}{c}{NLO-RMSE} & \multicolumn{2}{c}{RCF-GZ} & \multicolumn{2}{c}{RCF-RMSE} & \multicolumn{2}{c}{NLO-RMSE} \\
      \hline
       & Bias & Cov & Bias & Cov & Bias & Cov & Bias & Cov & Bias & Cov & Bias & Cov & Bias & Cov & Bias & Cov & Bias & Cov \\
      \hline
      4 & 6.9 & 89.1 & 8.5 & 88.6 & 9.0 & 84.9 & 4.1 & 91.1 & 4.8 & 90.9 & 6.1 & 89.4 & 2.0 & 93.4 & 2.2 & 93.3 & 3.7 & 92.0 \\
      5 & 5.9 & 88.6 & 7.3 & 88.1 & 8.3 & 83.4 & 3.8 & 90.6 & 4.5 & 90.2 & 5.8 & 88.1 & 2.0 & 92.5 & 2.4 & 92.5 & 3.4 & 91.7 \\
      6 & 6.1 & 87.8 & 7.8 & 87.3 & 7.9 & 79.9 & 4.0 & 90.3 & 5.2 & 90.0 & 5.4 & 87.7 & 2.2 & 92.8 & 2.6 & 92.5 & 2.7 & 91.5 \\
      7 & 5.6 & 87.4 & 7.3 & 86.6 & 7.9 & 77.2 & 3.4 & 90.4 & 4.5 & 90.2 & 4.7 & 87.1 & 2.0 & 92.2 & 2.5 & 92.2 & 2.4 & 91.0 \\
      8 & 5.7 & 86.9 & 7.9 & 86.6 & 7.8 & 74.4 & 4.5 & 90.1 & 5.8 & 89.3 & 5.3 & 85.5 & 2.7 & 92.4 & 3.2 & 91.9 & 2.8 & 90.6 \\
      9 & 6.3 & 87.1 & 8.4 & 86.3 & 8.7 & 70.2 & 3.5 & 89.2 & 4.7 & 88.9 & 4.4 & 84.5 & 2.2 & 92.3 & 2.8 & 91.9 & 2.5 & 90.8 \\
      10 & 5.5 & 87.2 & 7.6 & 87.1 & 7.2 & 70.4 & 3.6 & 89.3 & 4.9 & 88.4 & 4.3 & 83.8 & 1.9 & 91.9 & 2.6 & 91.3 & 2.1 & 89.9 \\
      11 & 6.5 & 86.4 & 8.6 & 85.9 & 8.2 & 61.1 & 3.3 & 88.7 & 4.8 & 88.2 & 4.2 & 81.6 & 2.1 & 91.7 & 2.7 & 91.3 & 2.2 & 89.5 \\
      12 & 5.0 & 86.9 & 7.5 & 86.5 & 6.7 & 61.9 & 3.5 & 89.1 & 4.9 & 88.3 & 4.0 & 80.9 & 2.2 & 91.4 & 2.8 & 91.0 & 2.3 & 88.6 \\
      \hline\hline
    \end{tabular}
    }
  
  \vspace{0.5em}
  \begin{minipage}{\textwidth}
    \footnotesize
    \renewcommand{\baselineskip}{11pt}
    \textbf{Note:} Bias (\%) and 95\% coverage across values of $K$ and sample sizes ($T$). Results are obtained over 10,000 Monte Carlo replications.
  \end{minipage}
\end{table}

Relative to the NLO CF estimator (NLO-RMSE), RCF-RMSE reduces bias by about 7\% on average, with larger gains in smaller samples. RCF-GZ delivers over a 22\% reduction. Even for $K \ge 10$, when NLO-RMSE remains root-$T$ consistent, the bias under NLO-RMSE is still more than 20\% higher than under RCF-GZ. To separate the role of the tuning rule from that of the RCF design, the Supplement S4.6. reports simulations for NLO-GZ alongside NLO-RMSE. Even under the NLO design, the proposed metric improves accuracy relative to RMSE-based tuning without compromising coverage.\footnote{Although the NLO design aims to produce approximately independent folds, we estimate its standard errors using the same HAC procedure as for RCF-DML to account for possible residual serial dependence.} 

Overall, the results show that RCF can improve on NLO, particularly in small samples. In such cases, the trade-off between sample usage and near-independence becomes binding: for NLO, increasing $K$ leaves too little data in training segments, weakening the benefits of neighbor deletion. As expected, these issues diminish as $T$ grows. Supplement S4.6. also reports two additional robustness checks. First, although the benchmark simulations use a standard Newey-West HAC correction, we assess the sensitivity of coverage and standard errors to alternative kernel and bandwidth choices for the generic HAC estimator, including the procedure recommended by \citet{lazarus2018har}. Second, although the benchmark Goldilocks zone implementation sets the window size to $S=3$, we examine the sensitivity of the results to alternative window widths.

\subsection{Robustness to violations of maintained assumptions}
\label{sec:violations}

\sloppy
We assess the robustness of RCF-DML using three complementary designs, detailed in the supplementary appendix. The first augments the benchmark SVAR with GARCH errors (\citealp{lutk2016SVARGARCH}), breaking time reversibility while preserving the conditional mean structure of the residual innovations. The second considers the state-dependent SVAR of \citet{goncalves2024state}, in which regime-specific dynamics break reversibility under exogenous and lagged-endogenous states, and make conditional stability fail when the state is contemporaneously endogenous. The third uses an SDFM with varying levels of idiosyncratic-error persistence $\rho_e$ (\citealp{stockwatson2016DFM}); as $\rho_e$ rises, residual predictability after conditioning on $X_t$ becomes more persistent, making conditional stability increasingly demanding.

Across these designs, the bias of RCF-DML rises by roughly a factor of two to five relative to the benchmark SVAR, but remains modest in absolute terms: about 6--7\% in large samples (10\% in shorter samples) in the most adverse case (the SDFM with $\rho_e=0.9$) and substantially smaller elsewhere. Coverage remains close to nominal under GARCH and under the exogenous and lagged-endogenous state designs, and deteriorates more visibly only when the state is contemporaneously endogenous or idiosyncratic persistence is high. Consistent with \citet{goncalves2024state} and \citet{kolesarplmoller2025nonlinear}, the estimator recovers the weighted average of regime-specific effects when the state is exogenous or predetermined. The deterioration under contemporaneous state endogeneity is nevertheless milder than in \citet{goncalves2024state}, possibly because high-dimensional controls and sample splitting absorb part of the dependence structure.

A practical implication concerns the comparison with buffer-based CF. Even in designs where conditional stability becomes more problematic, including the contemporaneously endogenous state case, RCF-DML typically continues to deliver lower bias and coverage closer to nominal than NLO in finite samples, so its sample-use advantage is maintained. The main exception is the SDFM with strongly persistent idiosyncratic errors ($\rho_e=0.9$), where NLO with Goldilocks zone tuning performs better. When residual diagnostics (for example, regressions of out-of-sample residuals on neighboring in-sample leads and lags) indicate substantial leakage that adjacent auxiliary blocks cannot absorb, buffer-based splitting becomes preferable; otherwise, RCF remains attractive for short, dependent macroeconomic samples.

\subsection{Double Machine Learning for Local Projections.}
\label{sec:dml_lp}

We use RCF-DML to estimate horizon-specific dynamic responses in a LP framework. In the simulations, the target is the recursive SVAR impulse response. Specifically, we consider the LP analog\footnote{\citet{chernozhukov2021causal} and \citet{agostini2024vaccination} provide early applications of DML to LP settings. For identification and interpretation of LP coefficients, see also \citet{jorda2005}, \citet{teulings2014}, \citet{plagborgmoller2021}, and \citet{montieloleaplagborgmoller2021}.} of the PLR model in \ref{y_t_plr} and \ref{d_t_plr}:

\begin{align}
y_{t+h} &= \theta_h d_t + g_h(X_t) + \epsilon_{t+h},
\label{Y_lp}\\
d_t &= m_0(X_t) + \xi_t,
\qquad E[\xi_t\mid X_t] = 0,
\label{f_lp}
\end{align}
for horizons $h=0,1,\ldots,H$.

At each horizon, RCF-DML re-estimates the fold-specific nuisance function in the reduced-form outcome equation, yielding the residualized relation
\begin{equation}
\hat{\chi}_{t+h} = \theta_h \hat{\xi}_t + \tilde{\epsilon}_{t+h},
\label{res_LP}
\end{equation}
where $\hat{\chi}_{t+h} = y_{t+h} - \hat{g}_h^r(X_t)$, $\hat{\xi}_t = d_t - \hat{m}_0(X_t)$, $g_h^r \neq g_h$ is the reduced-form outcome mapping, and $\tilde{\epsilon}_{t+h} \approx \epsilon_{t+h}$ absorbs nuisance estimation error in finite samples. As in the rest of the paper, nuisance functions are estimated using regularized linear methods.

Causal interpretation of the scalar coefficient $\theta_h$ requires the horizon-specific conditional exogeneity condition
\begin{align}
E[\epsilon_{t+h}\mid X_t,d_t]=0,\qquad h=0,1,\ldots,H,
\label{exo_cond}
\end{align}
which is the dynamic analog of conditional unconfoundedness. Because the outcome is dated at $t+h$, the time-$t$ conditioning set $X_t$ must be rich enough to absorb confounding relevant at horizon $h$. In this respect, high-dimensional controls can strengthen the plausibility of \eqref{exo_cond} by approximating a broader observed information set, and partialling out predictable policy variation linked to macro-financial conditions or anticipation effects. DML should therefore be viewed here as a regularization-based extension of regression-control LPs to high-dimensional settings, delivering feasible conditioning, orthogonality-based bias reduction, and valid inference.

The remaining threats to causal interpretation are not specific to DML, but to LP identification more generally. In particular, the procedure does not by itself eliminate contemporaneous simultaneity between $d_t$ and the outcome, omitted structural shocks not absorbed by $X_t$, or anticipatory responses not proxied by observables. When \eqref{exo_cond} holds, $\theta_h$ is interpretable as the causal impulse response conditional on the observed information set; otherwise, it is a residualized projection coefficient whose interpretation depends on the conditioning set and nuisance specification.\footnote{In nonlinear or state-dependent environments, the scalar residualized LP estimand $\theta_h$ should not be interpreted as a state-dependent impulse response function.} Supplementary simulations adapted from \citet{goncalves2024state} are consistent with the main conclusions of that literature.

\section{Application}\label{Application}
We apply the time-series RCF-DML estimator to estimate the dynamic responses of Italian GDP, private nonfinancial corporations (PNFC) lending, PNFC spreads, and the components of the Tier 1 capital ratio to prudential capital shocks.

\subsection{Background}
\label{back_lit}
Since Basel I, minimum capital requirements have been a central instrument of banking regulation. After the 2008 financial crisis, the reform agenda intensified, culminating in the Basel III package aimed at strengthening bank resilience through tighter capital and liquidity standards (\citealp{basel2010assessment}). In principle, higher capital requirements can reduce risk-taking and default probabilities (\citealp{dewatripont1994theory,holmstrom1997financial}), but they may also tighten credit supply and depress real activity in the short run (\citealp{fbf2010reforming, iif2011cumulative}).

A substantial empirical literature studies the macroeconomic effects of bank capital shocks using alternative identification strategies and capital measures. Despite these differences, the evidence points to a common pattern: tighter capital requirements tend to reduce output and corporate lending in the short run and to increase lending spreads. Table S.2 in Supplement S6 compares the main estimates in the literature.

\subsection{Data and Methodology}
\label{data_and_met}
We assemble the dataset from official sources, following related empirical studies on bank capital shocks (\citealp{noss2016estimating, meeks2017capital, mesonnier2017macroeconomic, kanngiesser2020macroeconomic, conti2023bank, d2026economic}). These contributions guide the selection of variables capturing the main transmission channels from regulatory capital to bank balance sheets and macroeconomic outcomes.

The policy variable is the Tier 1 capital-to-risk-weighted-assets ratio, sourced from the IMF.\footnote{The IMF reports semi-annual data for Italy from 2005:Q2 onward; we interpolate them to obtain a quarterly series. The supplementary appendix reports validation exercises.} The control set combines lagged macro-financial variables with contemporaneous values of fast-moving financial indicators. Stock prices, exchange rates, yields, and yields spreads enter contemporaneously, while slower-moving macroeconomic and banking variables enter only with lags.\footnote{Supplement S5. lists all variables and data sources.} This timing structure is motivated by two considerations. First, fast-moving financial variables proxy information plausibly available within the quarter, in a spirit resembling block-recursive timing assumptions in the VAR literature (\citealp{christiano2005nominal}). Second, entering slower-moving variables only with lags helps avoid conditioning on post-treatment mediators that may lie on the transmission path from prudential capital to subsequent outcomes (\citealp{pearl2009causal}). As a robustness check, Supplement S6.1. also considers an outcome-specific timing scheme that enlarges the contemporaneous conditioning set for bank balance-sheet outcomes.

We include three lags of the controls, the outcome, and the policy variable. All series are seasonally adjusted and tested for stationarity; variables in levels are log-differenced, while ratios and interest rates enter in first differences.

The empirical analysis uses the PLR-LP specification in \eqref{Y_lp} and \eqref{f_lp}, together with the residualized regression in \eqref{res_LP}. As in the rest of the paper, nuisance functions are approximated by Lasso, consistent with an approximately linear, high-dimensional setting. We set the number of folds to $K=6$, which provides a practical balance between main- and auxiliary-block size\footnote{Using six folds yields a main-to-auxiliary sample ratio consistent with standard rule-of-thumb practice.\label{ft:13}} (see Section \ref{RCF}). Fold-specific shrinkage parameters are selected by the Goldilocks zone criterion over a grid of \(100\) values.

Under the maintained LP exogeneity condition \eqref{exo_cond}, the horizon-$h$ coefficients can be interpreted as causal responses. In this application, that assumption is made more plausible by combining contemporaneous fast-moving financial controls with lagged macro-financial variables, although the identifying restriction remains substantive and cannot rule out all omitted shocks or anticipation effects.

For outcomes modeled in differences, we report cumulative responses by estimating, at each horizon $h$, a direct LP for the cumulative residualized outcome (\citealp{jorda2025}):
\begin{equation}
\sum_{h=0}^{H} \hat{\chi}_{t+h} = \Theta_H \, \hat{\xi}_t + \tilde{\epsilon}_{t,H},
\label{cum_lp}
\end{equation}
where, by linearity of OLS, $\Theta_H = \sum_{h=0}^{H}\theta_h$. Direct estimation yields HAC standard errors for $\hat{\Theta}_H$ horizon by horizon, since autocorrelation in $\tilde{\epsilon}_{t,H}$ captures the cross-horizon dependence in $\{\hat{\theta}_h\}_{h=0}^{H}$ without requiring separate estimation of their joint covariance matrix. We retain the benchmark choices used throughout the paper: Goldilocks window size $S=3$ and Newey-West HAC inference with horizon-dependent bandwidth $m=\min(h+1,24)$.

\subsection{Results} \label{Results}
We focus on four transmission channels to real GDP: Tier 1 capital, risk-weighted assets, the real value of new loans to PNFCs, and the PNFC interest-rate spread, measured as the difference between the lending rate on new loans and the three-month Euribor (\citealp{conti2023bank}). Figure \ref{fig:Irf_rcf} reports the corresponding cumulative responses. The PNFC spread temporarily rises, while lending declines, consistent with banks reducing relatively high-risk-weighted exposures in response to tighter capital requirements.

\begin{figure}[htb]%
\centering
\begin{adjustbox}{width= 0.8\textwidth}
   \includegraphics{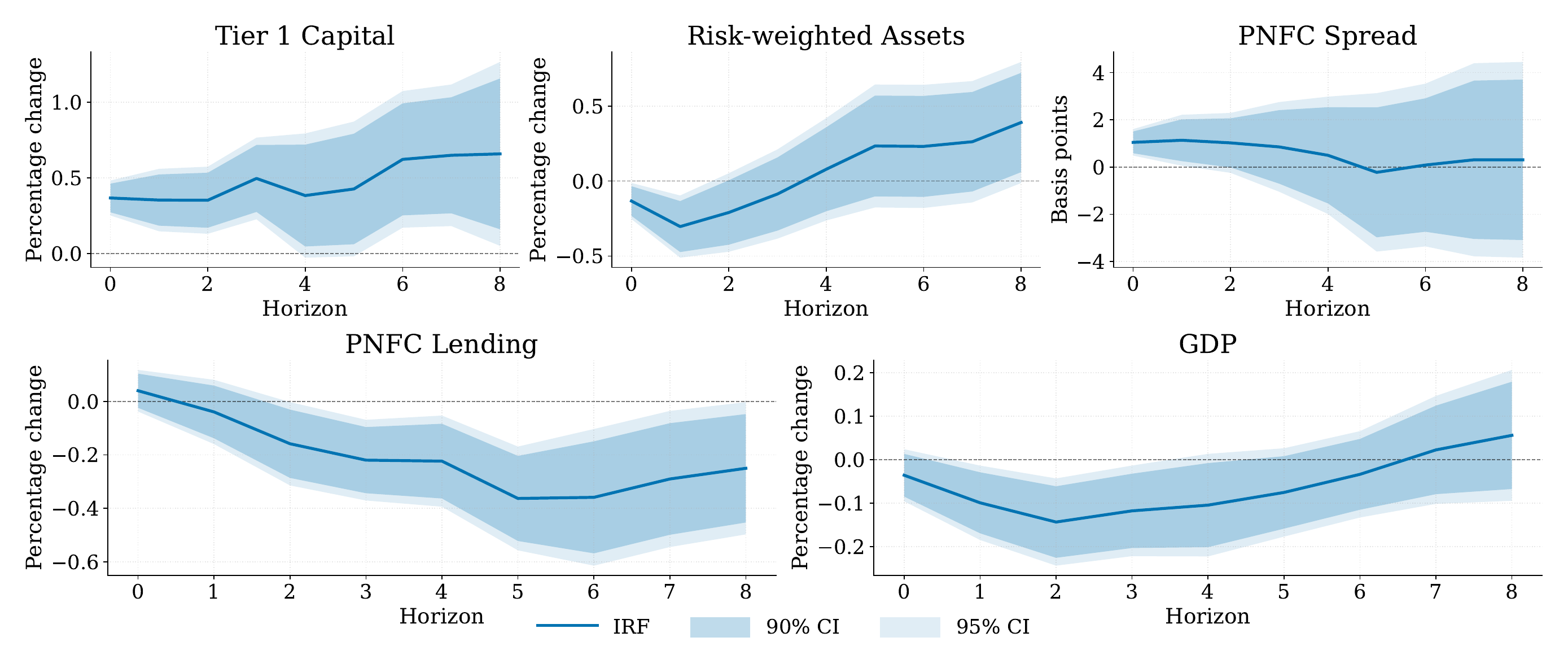}
\end{adjustbox}

\caption{Cumulative impulse response functions to a regulatory capital shock obtained via RCF-DML LPs. The dark (light) blue shaded areas denote 90\% (95\%) cumulative confidence intervals computed via HAC. Responses are normalized to 50 basis points increase in Tier 1 capital ratio. }%

\label{fig:Irf_rcf}%
\end{figure}
The contraction in credit supply is accompanied by adverse short run macroeconomic effects that gradually fade over the medium term. Real GDP declines by almost 0.15 percent after two quarters.

Overall, the responses accord with the transmission patterns emphasized in the empirical literature (Table S.2.). Supplement S6.2. reports the corresponding responses obtained under RMSE-based nuisance calibration, highlighting the value of the proposed tuning rule, especially for GDP. RMSE-based tuning tends to over-denoise the series and attenuate the policy signal, yielding a non-significant GDP response that contrasts with the existing literature.

\section{Conclusions}\label{conclusions}
This paper develops a Double Machine Learning estimator for macroeconomic time series, where short samples, persistence, and endogeneity limit the effectiveness of standard cross-fitting and predictive tuning.

We contribute two methodological innovations. First, Reverse Cross-Fitting exploits time reversibility to construct auxiliary samples from time-reversed blocks and, unlike neighbor-deletion designs, avoids buffer blocks, thereby improving sample-use efficiency while preserving the orthogonality-based conditions required for valid inference. Second, the Goldilocks zone tuning rule is a stability-based criterion that targets hyperparameter regions with stable predictive performance and lower small sample bias. Together, these features improve estimation accuracy in short, dependent samples while preserving the inferential guarantees of Double Machine Learning. We show that the estimator admits an asymptotic linear representation, is $\sqrt{T}$-consistent and asymptotically normal, and that kernel HAC estimators consistently recover the long-run variance.

Monte Carlo experiments support these theoretical results. In approximately sparse time-series designs, Reverse Cross-Fitting yields lower bias and more stable coverage than existing cross-fitting schemes, especially in short samples. The Goldilocks zone rule further improves accuracy and also benefits alternative cross-fitted DML estimators. We also extend the framework to scalar dynamic-response estimation via Local Projections with DML residualization, obtaining limited bias and near-nominal coverage in high-dimensional settings.

Violations of the maintained assumptions, including GARCH-type heteroskedasticity and state dependence, reduce accuracy but do not materially change the main finite-sample conclusions in our simulations. Supplementary simulation evidence further shows that the main findings are not materially affected by alternative HAC kernel and bandwidth choices or by changes in the Goldilocks zone window size.

An empirical application to prudential capital shocks in Italy illustrates the method’s practical relevance. The resulting impulse responses are economically plausible and aligned with the broader empirical literature.

Our approach is best viewed as an alternative, rather than uniformly superior, adaptation of Double Machine Learning to serially dependent data. Theoretical and simulation results clarify the settings in which Reverse Cross-Fitting is most useful.

Overall, the paper provides a feasible and theoretically grounded framework for applying orthogonal score methods with regularized nuisance estimation to macroeconomic time series, where standard DML cross-fitting is unreliable and structural uncertainty is substantial.
\section*{Acknowledgements}
We are grateful to the Co-Editor, Raffaella Giacomini, and anonymous referees for their invaluable comments and suggestions, as well as to Giuseppe Ciccarone, Luca Fanelli, Michele Lenza, Giuseppe Ragusa, and participants of the 2024 Sailing the Macro workshop. The views expressed in this paper are those of the authors and do not necessarily reflect those of the Bank of England.

\setcounter{section}{0}
\renewcommand{\thesection}{A}                 
\setcounter{subsection}{0}
\renewcommand{\thesubsection}{A.\arabic{subsection}}  
\setcounter{equation}{0}
\renewcommand{\theequation}{A.\arabic{equation}}      
\section*{Appendix A: Proofs of Results} \label{Appendix_A}
\subsection*{Proof of LEMMA \ref{lem:remainder}}
\begin{proof}
Throughout, \( |B_k|\asymp T \) by Assumption~\ref{ass:blocks}; all
\(o_p(\cdot)\) rates are uniform in \(k\).

\paragraph{Step 1 (Orthogonal expansion with $L_2$-quadratic remainder).}
By Assumption~\ref{ass:smooth}, for each $t\in B_k$,
\begin{equation}\label{eq:step1-linearization}
\psi(W_t;\theta_0,\hat\eta^{(k)})-\psi(W_t;\theta_0,\eta_0)
= \underbrace{\partial_\eta\psi(W_t;\theta_0,\eta_0)\!\left[\Delta^{(k)}\right]}_{=:L_t^{(k)}}
\;+\; r_t^{(k)},
\qquad \Delta^{(k)}:=\hat\eta^{(k)}-\eta_0,
\end{equation}
with $\E\|r_t^{(k)}\|^2\lesssim \|\Delta^{(k)}\|_{L_2}^4.$

\paragraph{Step 2 (Decomposition into conditional mean and centered part).}
Define the conditional means
\[
\bar L_t^{(k)}:=\E\!\big[L_t^{(k)}\mid \mathcal F_{\mathrm{aux},k}\big],
\qquad
\bar r_t^{(k)}:=\E\!\big[r_t^{(k)}\mid \mathcal F_{\mathrm{aux},k}\big].
\]
Average \eqref{eq:step1-linearization} over $t\in B_k$ and add/subtract the conditional means:
\[
R_k
=
\underbrace{\frac{1}{|B_k|}\sum_{t\in B_k}\big(\bar L_t^{(k)}+\bar r_t^{(k)}\big)}_{\text{(A) conditional mean}}
\;+\;
\underbrace{\frac{1}{|B_k|}\sum_{t\in B_k}\Big[(L_t^{(k)}-\bar L_t^{(k)})+(r_t^{(k)}-\bar r_t^{(k)})\Big]}_{\text{(B) centered fluctuation}}.
\]
We will show: (A) $=o_p(T^{-1/2})$ and (B) $=o_p(T^{-1/2})$.

\paragraph{Step 2.1 (Term A).}
Assumption~\ref{ass:rate} (conditional stability) gives directly
\[
\Bigg\|
\E\!\Big[
\frac{1}{|B_k|}\sum_{t\in B_k}
\{\psi(W_t;\theta_0,\hat\eta^{(k)})-\psi(W_t;\theta_0,\eta_0)\}
\ \Big|\ \mathcal F_{\mathrm{aux},k}
\Big]
\Bigg\| \;=\; o_p(T^{-1/2}),
\]
which equals $\big\|\frac{1}{|B_k|}\sum_{t\in B_k}(\bar L_t^{(k)}+\bar r_t^{(k)})\big\|$ by \eqref{eq:step1-linearization}. Hence (A) $=o_p(T^{-1/2})$.

\paragraph{Step 2.2 (Term B).}
Condition on \(\mathcal F_{\mathrm{aux},k}\) so that
\(\Delta^{(k)}\) is fixed. For the linear part, define
\[
\mathcal G_t:=\partial_\eta\psi(W_t;\theta_0,\eta_0),
\qquad
\bar{\mathcal G}_{t,k}:=
\E[\mathcal G_t\mid \mathcal F_{\mathrm{aux},k}].
\]
Then
\[
L_t^{(k)}-\bar L_t^{(k)}
=
\big(\mathcal G_t-\bar{\mathcal G}_{t,k}\big)[\Delta^{(k)}].
\]
Therefore,
\begin{align}
&\E\!\left[
\left\|
\frac{1}{|B_k|}\sum_{t\in B_k}
\big(L_t^{(k)}-\bar L_t^{(k)}\big)
\right\|^2
\Bigm| \mathcal F_{\mathrm{aux},k}
\right]
\nonumber\\
&\qquad =
\frac{1}{|B_k|^2}
\sum_{t,s\in B_k}
\E\!\left[
\left\langle
\big(\mathcal G_t-\bar{\mathcal G}_{t,k}\big)[\Delta^{(k)}],
\big(\mathcal G_s-\bar{\mathcal G}_{s,k}\big)[\Delta^{(k)}]
\right\rangle
\Bigm| \mathcal F_{\mathrm{aux},k}
\right].
\label{eq:lin-variance}
\end{align}
By Cauchy--Schwarz and the operator norm,
\[
\left|
\left\langle
\big(\mathcal G_t-\bar{\mathcal G}_{t,k}\big)[\Delta^{(k)}],
\big(\mathcal G_s-\bar{\mathcal G}_{s,k}\big)[\Delta^{(k)}]
\right\rangle
\right|
\le
\|\Delta^{(k)}\|_{L_2}^2
\,
\|\mathcal G_t-\bar{\mathcal G}_{t,k}\|_{\mathrm{op}}
\,
\|\mathcal G_s-\bar{\mathcal G}_{s,k}\|_{\mathrm{op}}.
\]

Hence
\begin{align}
&\E\!\left[
\left\|
\frac{1}{|B_k|}\sum_{t\in B_k}
\big(L_t^{(k)}-\bar L_t^{(k)}\big)
\right\|^2
\Bigm| \mathcal F_{\mathrm{aux},k}
\right]\le \nonumber\\
&\qquad\frac{\|\Delta^{(k)}\|_{L_2}^2}{|B_k|^2}
\sum_{t,s\in B_k}
\E\!\left[
\|\mathcal G_t-\bar{\mathcal G}_{t,k}\|_{\mathrm{op}}
\,
\|\mathcal G_s-\bar{\mathcal G}_{s,k}\|_{\mathrm{op}}
\Bigm| \mathcal F_{\mathrm{aux},k}
\right].
\label{eq:HAC-prelim}
\end{align}
By the conditional short-memory part of Assumption~\ref{ass:smooth}(ii),
uniformly in \(k\),
\[
\sum_{t,s\in B_k}
\E\!\left[
\|\mathcal G_t-\bar{\mathcal G}_{t,k}\|_{\mathrm{op}}
\,
\|\mathcal G_s-\bar{\mathcal G}_{s,k}\|_{\mathrm{op}}
\Bigm| \mathcal F_{\mathrm{aux},k}
\right]
=
O_p(|B_k|).
\]
Importantly, this bound does not require pointwise domination of conditional
autocovariances by unconditional autocovariances.
Therefore,
\begin{equation}\label{eq:HAC-G}
\E\!\left[
\left\|
\frac{1}{|B_k|}\sum_{t\in B_k}
\big(L_t^{(k)}-\bar L_t^{(k)}\big)
\right\|^2
\Bigm| \mathcal F_{\mathrm{aux},k}
\right]
\lesssim
\frac{\|\Delta^{(k)}\|_{L_2}^2}{|B_k|}.
\end{equation}
Taking square roots and using
\(\|\Delta^{(k)}\|_{L_2}=o_p(T^{-1/4})\) and \(|B_k|\asymp T\),
\begin{equation}\label{eq:lin-stdev}
\left\|
\frac{1}{|B_k|}\sum_{t\in B_k}
\big(L_t^{(k)}-\bar L_t^{(k)}\big)
\right\|
=
O_p\!\left(
\frac{\|\Delta^{(k)}\|_{L_2}}{\sqrt{|B_k|}}
\right)
=
o_p(T^{-3/4})
=
o_p(T^{-1/2}).
\end{equation}
\paragraph{Step 2.3 (Centered quadratic remainder).}
By the quadratic-remainder part of Assumption~\ref{ass:smooth}(i), interpreted
uniformly for the fold-specific perturbations \(\Delta^{(k)}\),
\[
\E\!\left[
\|r_t^{(k)}\|^2
\Bigm|
\mathcal F_{\mathrm{aux},k}
\right]
\lesssim
\|\Delta^{(k)}\|_{L_2}^4
=
o_p(T^{-1}).
\]
By the triangle inequality and Cauchy--Schwarz,
\[
\left\|
\frac{1}{|B_k|}\sum_{t\in B_k} r_t^{(k)}
\right\|
\le
\frac{1}{|B_k|}\sum_{t\in B_k} \|r_t^{(k)}\|
\le
\left(
\frac{1}{|B_k|}\sum_{t\in B_k} \|r_t^{(k)}\|^2
\right)^{1/2}.
\]
Taking conditional expectations and applying conditional Jensen to the concave
function \(x\mapsto x^{1/2}\),
\[
\E\!\left[
\left\|
\frac{1}{|B_k|}\sum_{t\in B_k} r_t^{(k)}
\right\|
\Bigm|
\mathcal F_{\mathrm{aux},k}
\right]
\le
\left(
\frac{1}{|B_k|}\sum_{t\in B_k}
\E\!\left[
\|r_t^{(k)}\|^2
\Bigm|
\mathcal F_{\mathrm{aux},k}
\right]
\right)^{1/2}
\lesssim
\|\Delta^{(k)}\|_{L_2}^2.
\]
Therefore,
\[
\frac{1}{|B_k|}\sum_{t\in B_k} r_t^{(k)}
=
O_p\!\left(\|\Delta^{(k)}\|_{L_2}^2\right)
=
o_p(T^{-1/2}).
\]
The same bound applies to the conditional mean, since by Jensen's inequality 
applied to the conditional expectation,
\[
\left\|
\frac{1}{|B_k|}\sum_{t\in B_k} \bar r_t^{(k)}
\right\|
=
\left\|
\E\!\left[
\frac{1}{|B_k|}\sum_{t\in B_k} r_t^{(k)}
\Bigm|
\mathcal F_{\mathrm{aux},k}
\right]
\right\|
\le
\E\!\left[
\left\|
\frac{1}{|B_k|}\sum_{t\in B_k} r_t^{(k)}
\right\|
\Bigm|
\mathcal F_{\mathrm{aux},k}
\right]
\lesssim
\|\Delta^{(k)}\|_{L_2}^2.
\]
Hence, by the triangle inequality,
\begin{equation}\label{eq:quad-rem-centered}
\frac{1}{|B_k|}\sum_{t\in B_k}
\big(r_t^{(k)}-\bar r_t^{(k)}\big)
=
O_p\!\left(\|\Delta^{(k)}\|_{L_2}^2\right)
=
o_p(T^{-1/2}).
\end{equation}
\paragraph{Step 2.4 (Term B conclusion).}
Combine \eqref{eq:lin-stdev} and \eqref{eq:quad-rem-centered}:
\[
\frac{1}{|B_k|}\sum_{t\in B_k}\Big[(L_t^{(k)}-\bar L_t^{(k)})+(r_t^{(k)}-\bar r_t^{(k)})\Big]
\;=\; o_p(T^{-1/2}),
\]
uniformly in $k$. Together with Step~2.1, we conclude $R_k=o_p(T^{-1/2})$ uniformly in $k$. 

\paragraph{Step 3 (Conclusion)}
Equation \eqref{eq:block-oracle} follows immediately from \eqref{eq:step1-linearization} and $R_k=o_p(T^{-1/2})$. Averaging \eqref{eq:block-oracle} over $k$ gives \eqref{eq:avg-block-oracle} because $K$ is fixed and the $o_p(T^{-1/2})$ term is uniform in $k$. \hfill $\square$

\end{proof} 

\subsection*{Proof of LEMMA \ref{lem:linear}}
\begin{proof}
Define
\[
\hat A_k(\vartheta)
\;:=\;
\frac{1}{|B_k|}\sum_{t\in B_k}\partial_\theta\psi\!\big(W_t;\vartheta,\hat\eta^{(k)}\big),
\quad
\hat A_k:=\hat A_k(\theta_0),
\quad
A:=\E[\partial_\theta\psi(W_t;\theta_0,\eta_0)].
\]
By Assumption~\ref{ass:smooth}(iv), $A$ is nonsingular.

\paragraph{Step 1 (Mean-value identity in $\theta$, integral form).}
By Gateaux differentiability in $\theta$ (Assumption~\ref{ass:smooth}) and the 
empirical first-order condition 
$|B_k|^{-1}\sum_{t\in B_k}\psi(W_t;\hat\theta_k,\hat\eta^{(k)})=0$, 
the fundamental theorem of calculus applied componentwise yields
\[
0
=
\frac{1}{|B_k|}\sum_{t\in B_k}\psi\!\big(W_t;\theta_0,\hat\eta^{(k)}\big)
\;+\; \tilde A_k\,\big(\hat\theta_k-\theta_0\big),
\qquad
\tilde A_k := \int_0^1 \hat A_k\!\big(\theta_0 + s(\hat\theta_k-\theta_0)\big)\,ds.
\]
Hence
\begin{equation}\label{eq:mv-eq}
\hat\theta_k-\theta_0
=
-\,\tilde A_k^{-1}\cdot
\frac{1}{|B_k|}\sum_{t\in B_k}\psi\!\big(W_t;\theta_0,\hat\eta^{(k)}\big),
\end{equation}
on the event that $\tilde A_k$ is invertible, which has probability approaching 
one by Step~3 below.

\paragraph{Step 2 (Orthogonal plug-in reduction to the oracle score).}
By Lemma~\ref{lem:remainder} (using Assumptions~\ref{ass:smooth} and \ref{ass:rate}),
\[
\frac{1}{|B_k|}\sum_{t\in B_k}\psi\!\big(W_t;\theta_0,\hat\eta^{(k)}\big)
=
\frac{1}{|B_k|}\sum_{t\in B_k}\psi\!\big(W_t;\theta_0,\eta_0\big)
\;+\; o_p\!\big(|B_k|^{-1/2}\big),
\]
uniformly in $k$ (since $K$ is fixed).

\paragraph{Step 3 (Stability of the per-fold Jacobian).}
We show $\tilde A_k\to_p A$. Combined with Assumption~\ref{ass:smooth}(iv) and 
the continuous-mapping theorem, this yields $\tilde A_k^{-1}\to_p A^{-1}$.

Because the PLR score is affine in $\theta$, $\partial_\theta\psi(W_t;\theta,\eta)$ 
does not depend on $\theta$, so $\tilde A_k = \hat A_k(\theta_0)$.\footnote{For a non-affine score, the same argument uses the local-uniform
Jacobian stability condition in Assumption~2.3(iii).} Decompose

\begin{align*}
\hat A_k(\theta_0)-A
&=
\underbrace{
\frac{1}{|B_k|}\sum_{t\in B_k}
\big\{\partial_\theta\psi(W_t;\theta_0,\hat\eta^{(k)})
      -\partial_\theta\psi(W_t;\theta_0,\eta_0)\big\}
}_{(I)} \\
&\quad+
\underbrace{
\frac{1}{|B_k|}\sum_{t\in B_k}
\big\{\partial_\theta\psi(W_t;\theta_0,\eta_0)-A\big\}
}_{(II)} .
\end{align*}
For $(II)$, Assumption~\ref{ass:smooth}(ii) implies $(II)=o_p(1)$ uniformly in
\(k\), since \(K\) is fixed and \(|B_k|\asymp T\). For $(I)$,
Assumption~\ref{ass:smooth}(iii) gives $(I)=o_p(1)$ uniformly in \(k\).
Hence \(\tilde A_k=\hat A_k(\theta_0)=A+o_p(1)\) and
\(\tilde A_k^{-1}\to_p A^{-1}\), uniformly in \(k\).

\paragraph{Step 4 (Consistency and rate).}
Insert Step~2 into \eqref{eq:mv-eq}:
\[
\hat\theta_k-\theta_0
=
-\,\tilde A_k^{-1}\cdot\bigg(
\frac{1}{|B_k|}\sum_{t\in B_k}\psi\!\big(W_t;\theta_0,\eta_0\big)
\;+\; o_p\!\big(|B_k|^{-1/2}\big)\bigg).
\]
Since $\tilde A_k^{-1}=O_p(1)$ by Step~3 and, by Assumption~\ref{ass:FCLT} 
(CLT at $u=1$ on block $B_k$),
\[
\frac{1}{|B_k|}\sum_{t\in B_k}\psi\!\big(W_t;\theta_0,\eta_0\big) \;=\; O_p\!\big(|B_k|^{-1/2}\big),
\]
we obtain
\[
\hat\theta_k-\theta_0 \;=\; O_p\!\big(|B_k|^{-1/2}\big).
\]

\paragraph{Step 5 (Linearization).}
Multiply the display in Step~4 by $\sqrt{|B_k|}$, use the stochastic order 
convention $O_p(1)\cdot o_p(|B_k|^{-1/2}) = o_p(|B_k|^{-1/2})$, and apply 
Slutsky's theorem using $\tilde A_k^{-1}\to_p A^{-1}$ from Step~3:
\[
\sqrt{|B_k|}\,(\hat\theta_k-\theta_0)
=
-A^{-1}\cdot \frac{1}{\sqrt{|B_k|}}\sum_{t\in B_k}\psi\!\big(W_t;\theta_0,\eta_0\big)
\;+\; o_p(1).
\]
Uniformity over $k$ holds because $K$ is fixed and all $o_p(\cdot)$ terms above 
are uniform in $k$. \hfill $\square$
\end{proof}
\subsection*{Proof of Theorem \ref{thm:oracle}}
\begin{proof}
\textbf{Step 1 (Linearization).}
From Lemma~\ref{lem:linear}, for each fold $k$,
\[
\sqrt{|B_k|}(\hat\theta_k-\theta_0)
= - A^{-1}\cdot \frac{1}{\sqrt{|B_k|}}\sum_{t\in B_k}\psi(W_t;\theta_0,\eta_0) + o_p(1),
\]
where the $o_p(1)$ term is uniform in $k$. Equivalently,
\[
\hat\theta_k-\theta_0
= -A^{-1}\cdot \frac{1}{|B_k|}\sum_{t\in B_k}\psi(W_t;\theta_0,\eta_0) + o_p(|B_k|^{-1/2}).
\]

\textbf{Step 2 (Average over folds).}
Taking the average over $k=1,\ldots,K$,
\[
\hat\theta-\theta_0
= -A^{-1}\cdot \frac{1}{K}\sum_{k=1}^K \frac{1}{|B_k|}\sum_{t\in B_k}\psi(W_t;\theta_0,\eta_0)
+ \frac{1}{K}\sum_{k=1}^K o_p(|B_k|^{-1/2}).
\]
Since $K$ is fixed and the remainders are uniform in $k$ (Lemma~\ref{lem:linear}), 
and $|B_k| \asymp T$ for each $k$, we have
\[
\frac{1}{K}\sum_{k=1}^K o_p(|B_k|^{-1/2}) = o_p(T^{-1/2}).
\]
Define $\bar m_k := |B_k|^{-1}\sum_{t\in B_k}\psi(W_t;\theta_0,\eta_0)$. Then
\[
\hat\theta-\theta_0
= -A^{-1}\cdot \frac{1}{K}\sum_{k=1}^K \bar m_k + o_p(T^{-1/2}).
\]

\textbf{Step 3 (Block average equals full-sample average).}
Write
\[
\frac{1}{K}\sum_{k=1}^K \bar m_k
= \sum_{k=1}^K \frac{|B_k|}{T}\,\bar m_k 
+ \sum_{k=1}^K\left(\frac{1}{K}-\frac{|B_k|}{T}\right)\bar m_k
=: (I) + (II).
\]
For term $(I)$: Since the blocks partition $\{1,\ldots,T\}$ up to $O(1)$ edge effects (Assumption~\ref{ass:blocks}),
\[
(I) = \frac{1}{T}\sum_{t=1}^T\psi(W_t;\theta_0,\eta_0) + O_p(T^{-1}).
\]
For term $(II)$: By construction (Assumption~\ref{ass:blocks}), $|B_k| = \lfloor T/K\rfloor + O(1) = T/K + O(1)$, so
\[
\left|\frac{1}{K} - \frac{|B_k|}{T}\right| 
= \frac{\big||B_k| - T/K\big|}{T} 
= \frac{O(1)}{T}
= O(T^{-1}).
\]
By the invariance principle (Assumption~\ref{ass:FCLT}) applied to each block and the mean-zero property $\E[\psi(W_t;\theta_0,\eta_0)]=0$, 
\[
\|\bar m_k\| = O_p(|B_k|^{-1/2}) = O_p(T^{-1/2}).
\]
Therefore, with $K$ fixed,
\[
\|(II)\| \le \sum_{k=1}^K \left|\frac{1}{K} - \frac{|B_k|}{T}\right| \cdot \|\bar m_k\|
= K \cdot O(T^{-1}) \cdot O_p(T^{-1/2}) = O_p(T^{-3/2}) = o_p(T^{-1/2}).
\]
Combining, 
\[
\frac{1}{K}\sum_{k=1}^K \bar m_k = \frac{1}{T}\sum_{t=1}^T\psi(W_t;\theta_0,\eta_0) + o_p(T^{-1/2}).
\]

\textbf{Step 4 (Asymptotic normality).}
We have shown
\[
\sqrt{T}(\hat\theta-\theta_0)
= -A^{-1}\cdot \frac{1}{\sqrt{T}}\sum_{t=1}^T\psi(W_t;\theta_0,\eta_0) + o_p(1).
\]
By Assumption~\ref{ass:FCLT}, the functional central limit theorem at $u=1$ gives
\[
\frac{1}{\sqrt{T}}\sum_{t=1}^T\psi(W_t;\theta_0,\eta_0) \Rightarrow \mathcal{N}(0,\Sigma).
\]
Since $A$ is nonsingular, 
by Slutsky's theorem and the continuous mapping theorem,
\[
\sqrt{T}(\hat\theta-\theta_0) \Rightarrow -A^{-1}\cdot \mathcal{N}(0, \Sigma)= \mathcal{N}(0, (-A^{-1})\Sigma (-A^{-1})^\top)=\mathcal{N}(0, A^{-1}\Sigma (A^{-1})^\top).
\] \hfill $\square$
\end{proof}

\end{document}